\theoremstyle{plain}%
\newtheorem{Theorem}{Theorem}[section] %
\newtheorem{Lemma}[Theorem]{Lemma}
\theoremstyle{definition}%
\newtheorem{Remark}[Theorem]{Remark} 
\newtheorem{Example}[Theorem]{Example}
\newcommand{\envspace}{\vspace{2mm}}
\def\D{\mathbb{D}}
\def\E{\mathbb{E}}
\def\P{\mathbb{P}}
\def\R{\mathbb{R}}
\def\be{\mathbf{e}}
\def\d{\delta}
\def\e{\varepsilon}
\def\cC{\mathcal{C}}
\def\cF{\mathcal{F}}
\def\cX{\mathcal{X}}
\renewcommand{\bar}[1]{\overline{#1}}
\DeclareMathAlphabet\mathbfcal{OMS}{cmsy}{b}{n}
\renewcommand{\be}{\begin{equation}}
\newcommand{\ee}{\end{equation}}
\newcommand{\bs}{\begin{split}}
\newcommand{\es}{\end{split}}
\renewcommand{\[}{\left[}
\renewcommand{\]}{\right]}
\renewcommand{\(}{\left(}
\renewcommand{\)}{\right)}
\renewcommand{\epsilon}{\varepsilon}
\renewcommand{\e}{\varepsilon}
\newcommand{\nn}{\nonumber}
\renewcommand{\D  }{\Delta}
\renewcommand{\d }{\Delta}
\title[Fair Pricing and Hedging Under Small Perturbations of the Num{\'e}raire]{Fair Pricing and Hedging Under Small Perturbations of the Num{\'e}raire on a Finite Probability Space}
\author{William Busching}
\address{William Busching, Bowdoin College, 255 Maine St, Brunswick, ME 04011}
\email{wbuschin@bowdoin.edu}
\author{Delphine Hintz}
 \address{Delphine Hintz, Bethany Lutheran College, 700 Luther Dr, Mankato, MN 56001}
 \email{hintzdelphine@gmail.com}
\author{Oleksii Mostovyi}
\address{Oleksii Mostovyi, Department of Mathematics, University of Connecticut, Storrs, CT 06269, United States}
\email{oleksii.mostovyi@uconn.edu}
\author{Alexey Pozdnyakov}
\address{Alexey Pozdnyakov, University of Connecticut, Storrs, CT 06269, United States}
\email{alexey.pozdnyakov@uconn.edu}
\thanks{This paper is a part of an REU project conduced in  Summer 2021 at the University of Connecticut. The work was supported by the National Science Foundation under grants  No. DMS-1950543 and DMS-1848339.}
\keywords{Fair pricing, F{\"o}llmer-Schweizer decomposition, num{\'e}raire,  stability, asymptotic analysis, conditional fair price}
\subjclass[2020]{60J74, 60G42,  93E20, 91G10, 91G20, 90C31}
\begin{document}

\maketitle

\begin{abstract}
We consider the problem of fair pricing and hedging in the sense of \cite{FS89} under small perturbations of the num{\'e}raire. We show that for replicable claims, the change of num{\'e}raire affects neither the fair price nor the hedging strategy. For non-replicable claims, we demonstrate that is not the case. By reformulating the key stochastic control problem in a more tractable form, we show that both the fair price and optimal strategy are stable with respect to small perturbations of the num{\'e}raire. Further, our approach allows for explicit asymptotic formulas describing the fair price and hedging strategy's leading order correction terms. Mathematically, our results constitute stability and asymptotic analysis of a stochastic control problem  under certain perturbations of the integrator of the controlled process, where constraints make this problem hard to analyze.
\end{abstract}

\section{Introduction}
In complete market models, the benchmark arbitrage-free pricing and hedging  approach is based on replication, and it typically results in a unique price of a given security. It is proven in \cite{Geman95} that, in such settings, a change of num\'eraire affects neither pricing nor hedging. This result allows for more efficient pricing methodologies based on various changes of num\'eraire that are particularly evident for pricing and hedging of interest rate derivatives, where completeness of the underlying model is often embedded in the model assumption. We refer to \cite{BrigoMercurio} for more details.

In incomplete markets, the situation is more complicated, in general. While there is still a class of derivative securities that is replicable, and the assertions of \cite{Geman95} apply to them, there are many other contingent claims that are non-replicable and for which even the notions of a price becomes more complicated. As the arbitrage-free price is not unique for non-replicable claims, other pricing methods have been introduced to overcome the non-uniqueness issue. Among them, fair pricing, see \cite{FS89}. It allows for regaining uniqueness of a (fair) price for a wide class of non-replicable contingent claims.

As a num{\'e}raire is present in essentially every financial model, it is important to understand the response of the pricing and hedging methodologies to perturbations of the num{\'e}raire. In this work, we aim at understanding how fair pricing and hedging change under the small perturbations of num{\'e}raire. Working in settings with multiple stocks, we show that for replicable claims, fair pricing and hedging do not change, while for non-replicable claims, we obtain explicit formulas for the first-order corrections for both the fair price and the hedging strategy. We work with a fairly general parameterization of perturbations of the num{\'e}raire. Also, regardless of the exact form of such perturbations, we show the stability of fair pricing and hedging with respect to small changes of num{\'e}raire. 

Mathematically, we study the sensitivity of a solution of a stochastic control problem to small perturbations of the controlled process. In the settings of a finite probability space, we identify the  conditions for the results to hold, which are represented via the invertibility of certain conditional covariance matrices. This condition is closely related to the non-redundancy of a given set of driving stochastic processes. In the core of our computations is a reformulation of this stochastic control problem from the one where we seek an optimal strategy among the ones satisfying the hard-to-deal with self-financing constraints to the one where these constraints become essentially vacuous. Our examples support the main results by showing that, for non-replicable claims, both the fair price and hedging strategy change under a perturbation of the num{\'e}raire, as well as that the change of num{\'e}raire is fairly different from a seemingly related change of interest rate.

Our results complement the ones in \cite{monatStricker}, where the stability of fair pricing is established with respect to perturbations of the claim’s payoff as well as \cite{BiaginiLocal}, by proving first-order corrections to the fair price and hedging strategy for non-replicable claims in discontinuous stock-price settings. 
Historically, the paper \cite{Merton73}  is one of the first to complete a change in num{\'e}raire although the term num{\'e}raire was never formally defined in \cite{Merton73}, and then \cite{Margrabe78} focuses on and clarifies the term num{\'e}raire and its uses. \cite{Geman95} summarizes the history of the num{\'e}raire and gives convincing examples of its usages. We refer to \cite{KarKar21}  for a  recent literature overview and multiple contemporary results involving the concept of num{\'e}raire.

 The remainder of this paper is organized as follows: in Section \ref{secFP},  { {we specify}}  the mathematical model, introduce the notions of fair pricing and hedging, and relate them to the F{\"o}llmer-Schweizer decomposition. In Section \ref{secNum}, we present the notion of num\'eraire and establish some important related results that are subsequently used in Section \ref{secFSNum} to characterize fair pricing and hedging under a new tradable num{\'e}raire. We provide examples in Section \ref{secExam}. In Section \ref{secStab}, we consider the stability of our solution under small perturbations of the num{\'e}raire; in Section \ref{secAsym}, we obtain explicit formulas for the first-order correction terms of each component under small perturbations of the num{\'e}raire.
 
\section{Fair Pricing and Hedging}\label{secFP}
This section aims to discuss the fair pricing and hedging in the settings with multiple stocks, making a digression and considering the case separately with one stock, where the computations become a bit simpler.
Consider an economy indexed over discrete time, with uncertainty represented by a finite probability space $(\Omega, \mathbb{F}, \mathbb{P})$.  The flow of information to all agents in this economy is represented by the filtration $\mathbb{F} = (\mathcal{F}_n)_{n=0,1,\ldots,T}$ with fixed $T \in \mathbb{Z}^+$. Assume that $\mathcal{F}_0$ is trivial, containing only $\emptyset$ and $\Omega$, and that $\mathcal{F}_T$ is the power set of $\Omega$.

We begin by supposing there is a bank account $S^0$ with a price process equal to one at all times. We use this bank account as a num{\'e}raire in our introduction of the F{\"o}llmer-Schweizer decomposition  while noting that our subsequent analysis   enables us to consider a more general tradable num{\'e}raire in the same circumstances.  Let $S = (S_n)_{n=0,1,\ldots,T}$ be a $d$-dimensional vector-valued  $\mathbb{F}$-adapted process; i.e., each $S_n$ is $\mathcal{F}_n$-measurable, and let $S^i$ describe  the evolution of the $i$th stock, $i\in\{1,\ldots, d\}$. We take $S$ to describe the discounted price process of $d$ stocks, and we denote the vector-valued increments of $S$ by $\Delta S_n := S_n - S_{n-1}$ for $n=1,\ldots,T$.

Let $\xi = (\xi_n)_{n= 1,\ldots,T}$ represent a $d$-dimensional trading strategy corresponding to the number of shares of stocks held at any time $n$. 
We   restrict the strategies to the ones that are {\it predictable} and {\it self-financing}, in the following senses: since our position in the stock market at time $n$ must be chosen at time $n-1$, we say that $\xi$ is \textit{predictable}, namely, that $\xi_n$ is $\mathcal{F}_{n-1}$-measurable for each $n\in \{ 1,\ldots,T\}$. Let $\xi^0$ denote the positing in the bank account, and denoting $\bar \xi: = (\xi^0, \xi)$, $\bar S := (1, S)$, we call a strategy $\bar \xi$ to be  \textit{self-financing} if 
\be\label{sf}
\bar \xi_n \cdot  \bar S_n = \bar \xi_{n+1} \cdot  \bar S_n,\quad n \in\{1,\ldots, T-1\},
\ee
where $\cdot$ denotes the scalar product in $\R^{d+1}$; below we also use the same symbol for the scalar product in $\R^d$. 
  \begin{Remark}\label{remSf}
  Condition 
   \eqref{sf} implies that the accumulated gains and losses resulting from the asset price fluctuations are the only sources of changes in the portfolio value. It can be restated in the following equivalent forms
   \be\label{871}
   \bar \xi_{n+1} \cdot \bar S_{n+1} - \bar \xi_{n}\cdot \bar S_n = \bar \xi_{n+1}\cdot (\bar S_{n+1} - \bar S_n),\quad  n \in\{1, \ldots, T-1\}.
   \ee 
   Summation over \eqref{871} implies that
     \be\label{872}
   \bar \xi_{n} \cdot \bar S_{n} = \bar \xi_{1}\cdot \bar S_0 +\sum\limits_{k = 1}^n\bar \xi_k\cdot (\bar S_{k} - \bar S_{k-1}), \quad n   { {=}}1, \ldots, T,
   \ee  
   where $\bar \xi_{1}\cdot \bar S_0$ is the initial wealth. We refer to 
  \cite[p. 293]{FS16} for more details. Below, we use the self-financing condition in the form \eqref{872}.
  \end{Remark}

 Let $\Theta$ be the set of all predictable $d$-dimensional processes (that vacuously correspond to self-financing trading strategies). For $\xi \in \Theta$, we define the gains process 
\be\label{2.1}
    G_n(\xi) = \sum_{j=1}^n \xi_j \cdot\Delta S_j,
    \quad n =   1, \ldots, T. 
 \ee
$G_n$ can be thought of as gains in wealth up to time $n$ for a self-financing trading strategy. Here and below $\d W_n = W_n - W_{n-1}$, for every process $W$. 
An important observation is that, on the right-hand side of \eqref{2.1}, as $S^0\equiv 1$, one can equivalently use $\bar \xi_j\cdot \bar \D S_j$; however, the self-financing condition for such $\bar \xi$'s must hold, in this case. 

Let  $V_0$ denote  the initial capital invested in the market at time $n=0$. Then the total output from the trading process at time $n$ is given by $V_0 + G_n(\xi)$. 
Now suppose there exists  derivative security that pays a value $H$ at the final time $T$. To successfully hedge this security, we wish to trade in such a way that the trading output at time $T$ is as close as possible to the payout of the derivative security.  One way to ensure this is to minimize the expected quadratic cost incurred from hedging the security -- namely, by solving the minimization problem
\begin{equation}\label{minPr}
    \min_{V_0 \in \mathbb{R}, \xi \in \Theta} \mathbb{E}\[\(H-V_0-G_{ { {T}}}(\xi)\)^2\].
\end{equation}
    Following \cite{FS89}, the  $V_0$ in \eqref{minPr} is called {\it fair price}. We also call the optimal $\xi$ the {\it fair price-based hedging strategy}. 
To characterize them, we introduce the discrete F{\"o}llmer-Schweizer decomposition, following \cite{FS89}  and \cite{monatStricker}. 

\begin{Theorem}\label{thm2.2}
Let $S = M + A$ be the semimartingale decomposition of $S$ into a martingale $M$ and a predictable process $A$. Then every square-integrable and $\mathcal{F}_N$-measurable contingent claim $H$ admits a decomposition
\begin{equation}\label{2.4}
    H = V_0 + \sum_{k=1}^T \xi_k\cdot\Delta S_k + L_T 
\end{equation}
for some $V_0 \in \mathbb{R}$, a process $\xi \in \Theta$, and a martingale $L $, where 
\begin{enumerate}
    \item $L $ and every component of $M$ are orthogonal, meaning $\mathbb{E}[\Delta L_n   \Delta M^i_n \mid \mathcal{F}_{n-1}]=0$,  $n\in\{1,\ldots, T\}$, $i\in\{1,\ldots, d\}$;
    \item $\mathbb{E}[L_0] = 0$.
\end{enumerate}
\end{Theorem}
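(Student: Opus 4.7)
The plan is to construct $V_0$, $\xi$, and $L$ by a backward induction on $n = T, T-1, \ldots, 1$, following the classical derivation of the F\"ollmer--Schweizer decomposition. One sets $V_T := H$ and, at each step, decomposes the increment $V_n - V_{n-1}$ in the form $\xi_n \cdot \Delta S_n + \Delta L_n$, with $\xi_n$ being $\mathcal{F}_{n-1}$-measurable and $\Delta L_n$ being conditionally orthogonal to each component of $\Delta M_n$ given $\mathcal{F}_{n-1}$; the initial value $V_0$ and the full martingale $L$ are then read off once the recursion reaches time $0$.

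For the inductive step, suppose that $V_n$ has been constructed as an $\mathcal{F}_n$-measurable random variable (automatically square-integrable, since $\Omega$ is finite). Using $\Delta S_n = \Delta M_n + \Delta A_n$, with $\Delta A_n$ being $\mathcal{F}_{n-1}$-measurable by predictability of $A$, I would substitute $\Delta L_n := V_n - V_{n-1} - \xi_n \cdot \Delta S_n$ into the two requirements $\E[\Delta L_n \mid \mathcal{F}_{n-1}] = 0$ (the martingale-increment property of $L$) and $\E[\Delta L_n \, \Delta M_n^i \mid \mathcal{F}_{n-1}] = 0$ for $i = 1, \ldots, d$. The first identity forces
$$V_{n-1} \;=\; \E[V_n \mid \mathcal{F}_{n-1}] - \xi_n \cdot \Delta A_n,$$
and, after substituting this back and using $\E[\Delta M_n^i \mid \mathcal{F}_{n-1}] = 0$, the orthogonality conditions reduce to the linear system
$$\E\[\Delta M_n (\Delta M_n)^\top \mid \mathcal{F}_{n-1}\] \xi_n \;=\; \E\[V_n \Delta M_n \mid \mathcal{F}_{n-1}\],$$
which is an identity between $\R^d$-valued $\mathcal{F}_{n-1}$-measurable random vectors and may be solved atom by atom.

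The main obstacle is solvability of this linear system on each atom of $\mathcal{F}_{n-1}$. On a finite probability space, the right-hand side always lies in the range of the conditional covariance matrix on the left: a short projection argument shows that any null vector $\alpha$ of that matrix also satisfies $\alpha \cdot \Delta M_n = 0$ conditionally on $\mathcal{F}_{n-1}$ and hence $\alpha \cdot \E[V_n \Delta M_n \mid \mathcal{F}_{n-1}] = 0$, so a predictable solution $\xi_n$ exists. If one further assumes the non-redundancy condition referenced in the introduction, i.e., conditional invertibility of $\E[\Delta M_n (\Delta M_n)^\top \mid \mathcal{F}_{n-1}]$, then $\xi_n$ is unique and is given explicitly by the conditional least-squares formula. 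With $\xi_n$ in hand, $V_{n-1}$ and $\Delta L_n$ are determined by the formulas above, and the conditional orthogonality and martingale-increment properties hold by construction.

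Iterating down to $n = 0$ yields a constant $V_0 \in \R$, as $\mathcal{F}_0$ is trivial. Setting $L_0 := 0$ and $L_n := \sum_{k=1}^n \Delta L_k$ produces a martingale $L$ with $\E[L_0] = 0$ whose increments are conditionally orthogonal to those of each $M^i$. Telescoping the identities $V_k - V_{k-1} = \xi_k \cdot \Delta S_k + \Delta L_k$ over $k = 1, \ldots, T$ and using $V_T = H$ yields
$$H \;=\; V_0 + \sum_{k=1}^T \xi_k \cdot \Delta S_k + L_T,$$
which is the desired decomposition.
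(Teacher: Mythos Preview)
The paper does not include a proof of this theorem; it is stated as a known result, with attribution to F\"ollmer--Schweizer and Monat--Stricker, and is followed only by the one-dimensional recursive formula \eqref{2.3} for $\xi$. Your backward-inductive construction is correct and is exactly the standard derivation underlying that formula: the martingale and orthogonality constraints on $\Delta L_n$ force the linear system you wrote, and your range argument (any null vector $\alpha$ of the conditional second-moment matrix of $\Delta M_n$ satisfies $\alpha\cdot\Delta M_n=0$ a.s.\ on the given atom, hence annihilates the right-hand side) yields a predictable solution on a finite probability space even without the invertibility hypothesis. In one dimension your system reduces to \eqref{2.3} via the identities $\mathrm{Cov}_{\mathcal F_{n-1}}(X,\Delta S_n)=\E[X\,\Delta M_n\mid\mathcal F_{n-1}]$ and $\mathrm{Var}_{\mathcal F_{n-1}}(\Delta S_n)=\E[(\Delta M_n)^2\mid\mathcal F_{n-1}]$. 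In short, you have supplied the proof the paper chose to omit.
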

Following this decomposition, successful hedging of a contingent claim $H$ requires the minimization of the unhedgeable $L_T$ term. In turn, this is closely related to the concept of {\it local risk minimization}, see \cite{BiaginiLocal}.
\begin{Remark}
When $S$ is a martingale, the decomposition \eqref{2.4} is the Galtchouk-Kunita-Watanabe decomposition, see \cite{KW} and \cite{Galchouk}.
\end{Remark}




The F{\"o}llmer-Schweizer decomposition is crucial in obtaining optimal solutions $\xi$ and $V_0$ to \eqref{minPr}. Indeed, following F{\"o}llmer and Schweizer \cite{FS89}, for $d = 1$ (that is with one risky asset), the recursive formula for $\xi$ is given by

\begin{equation}\label{2.3}
    \xi_n = \frac{\text{Cov}_{\mathcal{F}_{n-1}} (H - 
    \sum_{k=n+1}^N \xi_k \Delta S_k, \Delta S_n)}{
    \text{Var}_{\mathcal{F}_{n-1}}[\Delta S_n]},\quad n = 1,\ldots, T,  
\end{equation} 
where $\text{Cov}_{\mathcal{F}_{n-1}}$ and $\text{Var}_{\mathcal{F}_{n-1}}$ denote the conditional covariance and variance, respectively. 

It is this optimal strategy that we examine under a change of num{\'e}raire in the next section. Note that, in \eqref{2.3}, we need the process  $\text{Var}_{\mathcal{F}_{n-1}}[\Delta S_n]$, $n\in\{1,\ldots,T\}$, to be {\it strictly} positive. 
\begin{Remark}
With multiple risky assets, this condition is replaced with the invertibility of conditional covariance matrices with probability $1$. This condition is closely related to the non-redundancy of given stocks. However, the stability and asymptotic analysis are only imposed for the base model corresponding to $\e = 0$.  
\end{Remark}


\section{
Change of Num{\'e}raire}\label{secNum}

To establish the machinery to change to a general num{\'e}raire, we begin by defining the set of general wealth processes starting from $x$ to be
\begin{equation}\label{3.1}
    \mathcal{X}(x):= \left\{ x + G_n(\xi) = x+ \sum_{k=1}^n \xi_k \cdot \Delta S_k \mid \xi \in\Theta ,~n\in\{0,\ldots,T\} \right\},\quad x\in\R.
\end{equation}
A {\it num{\'e}raire}, most generally, can be defined as any {\it strictly} positive non-dividend-paying asset. We   focus on a tradable num\'eraire: that is, the ones  where $N$ is a strictly positive element of $\cX(1)$. That is, $N$ 
has the form 
\begin{equation}
\label{3.2}
        N_n = 1+\sum_{k=1}^{n}\eta_k\cdot\Delta{S}_k,\quad n\in\{0, \ldots, T\},
\end{equation}
for some $\eta\in\Theta$. $N_0 = 1$ is a normalization condition that is common in the literature. 

The F{\"o}llmer-Schweizer decomposition discussed above utilizes an unchanging bank account as a num{\'e}raire -- i.e., $N\equiv 1$ is constant at all times $[0, T]$. Indeed, a num{\'e}raire $N\equiv 1$ is implicit in many results in the field. 

By a change in num{\'e}raire, we mean  a change of units  in which a  price process of the traded securities, $\bar S = (1, {S})$, is measured.  Note that this process includes the bank account with value one across our time horizon as well as $d$ stocks - all encoded in a vector $\bar {S}$.  We denote the $(d+1)$-dimensional price process of traded securities under a change of num{\'e}raire $$S^N := \(\frac{1}{N}, \frac{S}{N}\).$$ 

Under the new num\'eraire, it is natural to introduce the set of wealth processes, analogous to \eqref{3.1}; this is done in  \eqref{XN} below. To emphasize the self-financing constraints under a change of num\'eraire, first, we notice that one can rewrite \eqref{3.1} as
\begin{equation}\nn
\bs
    \mathcal{X}(x)= \left\{ x+ \sum_{k=1}^n \bar\xi_k \cdot \Delta \bar S_k \mid \bar \xi~is ~predictable~and~self-financing,~n\in\{0,\ldots,T\}\right\},\\ \quad x\in\R.\end{split}
\end{equation}
This allows us to naturally extend \eqref{3.1} to the set of wealth processes under the num\'eraire $N$ as follows:
\begin{equation}\label{XN}
\bs
    \mathcal{X}^N(x):= \left\{   x+ \sum_{k=1}^n \bar\xi_k \cdot \Delta S^N_k \mid \bar \xi~is ~predictable~and~self-financing,~n\in\{0,\ldots,T\}\right\},\\
    \quad x\in\R,
\end{split}
\end{equation}
where the self-financing condition, analogous to \eqref{sf}, now must hold under the num\'eraire $N$,  that is 
\be\label{sfN}
  \bar\xi_n \cdot  S^N_n =   \bar\xi_{n+1} \cdot    S^N_n,\quad 1 \leq n \leq T-1,
\ee
or, similarly, along the lines of Remark \ref{remSf}, where we   use
     \be\label{873}
   \bar \xi_{n} \cdot   S^N_{n} = \bar \xi_{1}\cdot   S^N_0 +\sum\limits_{k = 1}^n\bar \xi_k\cdot (  S^N_{k} -   S^N_{k-1}), \quad n \in\{1, \ldots, T\},
   \ee  
   and where $\bar \xi_{1}\cdot   S^N_0$ is the initial wealth. 

We begin by noting a convenient result, analogous to   \cite[Lemma 6.1]{MostovyiNumeraire}, which demonstrates that wealth processes under a change of tradable num{\'e}raire   adjust  in an expected way. {\it In particular,  the replicable claims stay replicable under a change of num\'eraire.} 
The proof of this result is similar to the proof of Lemma \ref{lem872} below, and it is skipped. 

\begin{Lemma}\label{lemCN}
Consider a stock price process under a change of num{\'e}raire $S^N = (\frac{1}{N}, \frac{S}{N})$. 
 { {Let us fix} $ {x\in\R}$  {and consider the sets of wealth processes under the old and new num{\'e}raires} $ {\cX(x)}$  {and} $ {\cX^N(x)}$  {given by} \eqref{3.1}   {and} \eqref{XN},  {respectively}}. 
Then, we have 
\begin{equation}\label{3.3}
    \mathcal{X}^N(x) = \frac{\mathcal{X}(x)}{N} = \bigg\{ \frac{X}{N} = \(\frac{X_n}{N_n}\)_{\{n\in\{0,\ldots, T\}\}}\mid X \in \mathcal{X}(x) \bigg\}.
    \end{equation}
\end{Lemma}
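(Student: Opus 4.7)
The plan is to reduce the set equality \eqref{3.3} to two observations: the self-financing condition is preserved under the change of num\'eraire, and wealth processes transform by division by $N$. First, I would note that $\bar S_n = N_n S^N_n$ componentwise (since $S^N_n=(1/N_n,S_n/N_n)$ and $\bar S_n=(1,S_n)$) and that $N_n>0$ by definition of a num\'eraire. Dividing the self-financing condition \eqref{sf} through by the positive scalar $N_n$ produces \eqref{sfN}, and multiplying \eqref{sfN} by $N_n$ recovers \eqref{sf}; hence a predictable $(d+1)$-dimensional process $\bar\xi$ is self-financing with respect to $\bar S$ if and only if it is self-financing with respect to $S^N$.

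Second, I would recast both $\cX(x)$ and $\cX^N(x)$ in ``current-wealth'' form. By the telescoping identities \eqref{872} and \eqref{873}, any predictable self-financing $\bar\xi$ with initial wealth $\bar\xi_1\cdot\bar S_0=x$ satisfies $x+\sum_{k=1}^n\bar\xi_k\cdot\Delta\bar S_k=\bar\xi_n\cdot\bar S_n$ for each $n$, and the analogous identity holds under $S^N$. On the $\bar S$ side, any $\xi\in\Theta$ appearing in \eqref{3.1} extends uniquely to a predictable self-financing $\bar\xi=(\xi^0,\xi)$ with prescribed initial wealth $x$ by solving recursively for the bank-account component $\xi^0$, which is possible because the first coordinate of $\bar S$ is the constant $1$. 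Using $N_0=1$, so that $S^N_0=\bar S_0$, I obtain
\[
\cX(x)=\{(\bar\xi_n\cdot\bar S_n)_{n=0}^T:\bar\xi\text{ predictable and self-financing under }\bar S,\ \bar\xi_1\cdot\bar S_0=x\},
\]
and, in exactly the same way from \eqref{XN},
\[
\cX^N(x)=\{(\bar\xi_n\cdot S^N_n)_{n=0}^T:\bar\xi\text{ predictable and self-financing under }S^N,\ \bar\xi_1\cdot S^N_0=x\}.
\]

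Combining these two representations with the equivalence of the self-financing conditions from the first step, I see that both sets are parametrized by the same family of strategies $\bar\xi$, and the corresponding wealth processes are related pointwise by $\bar\xi_n\cdot S^N_n=(\bar\xi_n\cdot\bar S_n)/N_n$ for every $n\in\{0,\ldots,T\}$; this yields the claimed identity \eqref{3.3}. I expect the only genuine subtlety to be the re-parametrization of $\cX(x)$ into current-wealth form — in particular, reconciling the ``free $\xi\in\Theta$'' formulation in \eqref{3.1} with the ``self-financing $\bar\xi$ with fixed initial wealth'' formulation — but this is routine bookkeeping that hinges on the normalization $N_0=1$ and on the first component of $\bar S$ being identically $1$. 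No real technical obstacle is anticipated.
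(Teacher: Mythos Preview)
Your argument is correct and takes a cleaner route than the paper's intended one. The paper does not prove Lemma~\ref{lemCN} explicitly but points to the computation in Lemma~\ref{lem872}, which proceeds via the discrete product rule for $\Delta(W^N_nN_n)$ and then invokes the self-financing identity~\eqref{873} to kill the residual term. You bypass that calculation entirely by observing that the rebalancing form~\eqref{sf} of the self-financing condition is linear in the time-$n$ price vector, so dividing by the positive scalar $N_n$ yields~\eqref{sfN} and conversely; combined with $N_0=1$ and the pointwise identity $\bar\xi_n\cdot S^N_n=(\bar\xi_n\cdot\bar S_n)/N_n$, this shows the \emph{same} $\bar\xi$ parametrizes both sets. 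The product-rule approach has the merit of directly relating gains increments, which is what Lemma~\ref{lem872} actually needs; your approach is more transparent for the set-level statement here and would in fact also streamline Lemma~\ref{lem872}. The one place where ``in exactly the same way'' hides a small extra step is the current-wealth rewriting of $\cX^N(x)$: since no component of $S^N$ is identically $1$, normalizing an arbitrary self-financing $\bar\xi$ from~\eqref{XN} to satisfy $\bar\xi_1\cdot S^N_0=x$ requires shifting by a scalar multiple of the num\'eraire-replicating portfolio (this is where tradability of $N$ enters), which is indeed the routine bookkeeping you anticipated.
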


 \section{Fair Pricing and Hedging and a Num\'eraire Change}\label{secFSNum}
We now turn to the question of applying the F{\"o}llmer-Schweizer decomposition-based hedging mechanism in an environment with a new num{\'e}raire. This has to be done with care. 
First, let us observe that the objective function in \eqref{minPr} can be rewritten as
\begin{equation}\label{854}
    \min_{X\in\bigcup\limits_{x\in\R} \cX(x)} \mathbb{E}\[\(H-X_T\)^2\].
\end{equation}
As the contingent claim $H$ measured under the new num\'eraire to $N$ is worth $H/N$, using the notations of the previous section, the natural formulation of the \eqref{minPr} under $N$ becomes
\begin{equation}\label{855}
    \min_{X^N\in\bigcup\limits_{x\in\R} \cX^N(x)} \mathbb{E}\[\(\frac H {N_{ {  T}}} -X^N_T\)^2\].
\end{equation}
Recalling the   definition of $\cX^N(x)$'s (including the self-financing condition  \eqref{sfN}), the latter minimization problem \eqref{855} can be restated as follows.
\begin{equation}\label{3.4}\bs
     {\rm minimize}&\qquad \mathbb{E} \left[ \left(\frac{H}{N_T}-\frac{V_0}{N_0}-\sum_{k=1}^T\bar \xi_k \cdot \Delta S^N_k\right)^2\right],\\
     {\rm subject~to}  &\qquad V_0 \in \mathbb{R},\\
    &\qquad \bar\xi ~is~predictable~and~satisfying ~\eqref{sfN}.
    \end{split}
\end{equation}
The optimal $V_0$ and $\bar\xi$ to \eqref{3.4} (whose existence is proven below)  are defined to be the {\it fair price} and the {\it fair price-based hedging strategy} (or simply, the hedging strategy) under the num\'ereire $N$. 
 
One can see that the self-financing constrain \eqref{sfN} enters problem \eqref{3.4} and makes it harder to analyze. 
Whereas in \eqref{minPr}  we considered a risk-free asset, whose increments $\Delta S^0\equiv  0$ at all times as a component of $\Delta \bar S$, $\Delta S^N$ has no zero component, in general, since a tradable num{\'e}raire may change over time. 
In the change of num{\'e}raire case, the optimal $\bar\xi$ then additionally depends on the risk-free asset, which may evolve over time. The component of $\bar\xi$ corresponding to investment in the risk-free asset cannot, therefore, be chosen essentially after solving \eqref{3.4} in a way to  make the optimal $\bar\xi$ self-financing, as the self-financing condition \eqref{sfN} is a constraint on our minimization problem \eqref{3.4}.  The following lemma demonstrates how we can bypass this issue by reformulating the minimization problem \eqref{3.4} in a more convenient way.

 { {Let us set}}
\begin{equation}\label{3.5} { {
    \min_{V_0 \in \mathbb{R},\xi \in \Theta} \mathbb{E} \left[ \left(\frac{H-V_0-\sum_{k=1}^T \xi_k \cdot \Delta S_k}{N_T} \right)^2 \right], }}
\end{equation}
 { {and let}} $ { {W\in\bigcup\limits_{x\in\R}\cX(x)}}$  { {and}}  $ { {W^N\in\bigcup\limits_{x\in\R}\cX^N(x)}}$  { {denote the optimal wealth processes to}  \eqref{3.4}   {and}  \eqref{3.5}  {, respectively.}} 
\begin{Lemma}\label{lem871}
Let $N$ be a  { {tradable}}  num\'eraire. Then,   \eqref{3.4} is equivalent to  \eqref{3.5} in the sense that 
 the objective functions are equal, and there is a one-to-one correspondence between the optimal wealth processes  to \eqref{3.4} and \eqref{3.5}, which are unique with probability $1$, and we have 
\be
\label{owp}
W^N = \frac{W}{N}.\ee
 
\end{Lemma}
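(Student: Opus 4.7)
The plan is to reformulate \eqref{3.4} as a minimization over $\bigcup_{x \in \R} \cX^N(x)$, apply Lemma \ref{lemCN} to identify this set with $\bigcup_{x \in \R} \cX(x)/N$, and recognize the transformed objective as that of \eqref{3.5}. Since $N_0 = 1$, we have $V_0/N_0 = V_0$, and by \eqref{873} the quantity $V_0 + \sum_{k=1}^T \bar\xi_k \cdot \Delta S^N_k$ is precisely the terminal value of an element of $\cX^N(V_0)$ (with the initial position $\bar\xi_1 \cdot S^N_0$ absorbed into $V_0$). Thus, as $(V_0, \bar\xi)$ ranges over its feasible set in \eqref{3.4}, the candidate terminal values range exactly over $\{X^N_T : X^N \in \bigcup_x \cX^N(x)\}$.

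Next, I would apply Lemma \ref{lemCN} to write every such $X^N$ as $X/N$ with $X \in \bigcup_x \cX(x)$, so that $X_n = V_0 + \sum_{k=1}^n \xi_k \cdot \Delta S_k$ for some $V_0 \in \R$ and $\xi \in \Theta$. The integrand transforms as
\[\frac{H}{N_T} - X^N_T \;=\; \frac{H - X_T}{N_T} \;=\; \frac{H - V_0 - \sum_{k=1}^T \xi_k \cdot \Delta S_k}{N_T},\]
matching the integrand of \eqref{3.5} exactly. Since $N > 0$ pointwise, the map $X \mapsto X/N$ is a bijection between $\bigcup_x \cX(x)$ and $\bigcup_x \cX^N(x)$ preserving the objective value, so the two problems share the same infimum and their optimal wealth processes correspond via $W^N = W/N$.

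For uniqueness, I would argue by strict convexity: the objective of \eqref{3.5} is a strictly convex quadratic functional of the terminal wealth $X_T$ (the weight $1/N_T^2$ being strictly positive), and on the finite probability space $\Omega$ the attainable terminal wealths form a finite-dimensional affine subspace of $L^2(\Omega, \cF_T, \P)$. Strict convexity then yields a unique minimizing $X_T$, and under the non-redundancy condition (invertibility of conditional covariance matrices, noted after Theorem \ref{thm2.2}) the optimal $\xi$, and hence the whole wealth process $W$, are determined $\P$-a.s.; the bijection transfers uniqueness to $W^N$. The main obstacle is bookkeeping rather than conceptual: one must carefully justify that the initial wealth parameter $V_0$ in \eqref{3.4} can be freely identified with $\bar\xi_1 \cdot S^N_0$, so that reformulating \eqref{3.4} as a minimization over wealth processes is legitimate despite the constraint \eqref{sfN}. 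Once this identification is in place, Lemma \ref{lemCN} does all the substantive work.
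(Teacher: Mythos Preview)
Your proposal is correct and follows essentially the same route as the paper: both reformulate \eqref{3.4} as a minimization over $\bigcup_{x}\cX^N(x)$, invoke Lemma~\ref{lemCN} to rewrite it as a minimization over $\bigcup_{x}\cX(x)$ (using $N_0=1$ to match the initial values), and then appeal to strict convexity of the quadratic objective on a finite-dimensional space for existence and uniqueness of the optimal terminal wealth. Your added remark that pinning down the full wealth process (rather than just $W_T$) relies on the non-redundancy condition is a detail the paper's proof leaves implicit.
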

\begin{proof}
Observe that the expression $X^N := V_0+\sum_{k=1}^T \bar\xi_k \cdot \Delta S^N_k$ $\in$ $\mathcal{X}^N(V_0)$.  Applying \eqref{3.3}, we have $X^N = \frac{X}{N}$, for some $X$ of the form $X_n = V_0+\sum_{k=1}^n \xi_k \cdot \Delta S_k$, $n\in\{0,\ldots, T\}$, such that $X\in \mathcal{X}(V_0)$. Note that the $\xi$ that gives rise to each wealth process may be different, but $V_0$ must be the same in both, due to the normalization condition, $N_0 = 1$ and Lemma \ref{lemCN}. Then \eqref{3.4}, or rather an equivalent problem \eqref{855}, becomes
    \be\label{171}\bs
        \min_{X^N \in\bigcup\limits_{V_0 \in \mathbb{R}} \mathcal{X}^N(V_0)} 
        \mathbb{E} \left[ \left(\frac{H}{N_T}-X^N_{ { {T}}}\right)^2\right]
        &= \min_{X \in  \bigcup\limits_{V_0 \in \mathbb{R}} \mathcal{X}(V_0)} 
        \mathbb{E}\left[\left(\frac{H}{N_T}-\frac{X_T}{N_T}\right)^2\right] \\
        &= \min_{V_0 \in \mathbb{R},\xi \in \Theta} 
        \mathbb{E} \left[ \left(\frac{H-V_0-\sum_{k=1}^T \xi_k \cdot \Delta S_k}{N_T} \right)^2 \right],
    \end{split}\ee
which  is \eqref{3.5}. The chain of equalities above shows the objective functions in  \eqref{3.4} and \eqref{3.5} are equal. 
Next, using the direct method from the calculus of variation and strict convexity of the function $x\to x^2$, $x\in\R$, appearing in the objective, one can show the existence and uniqueness of the optimal self-financing wealth processes under the corresponding num\'eraires that are  the minimizers to  \eqref{3.4} and  \eqref{3.5}.  The computations above,  { {in particular,} \eqref{171}   {and Lemma}  \ref{lemCN}}, imply \eqref{owp}. 
\end{proof}

To emphasize the self-financing constraints, similarly to reformulation \eqref{3.4}, one can restate \eqref{3.5} as 
\be\label{8101}
\bs
     {\rm minimize}&\qquad \mathbb{E} \left[  \left(\frac{H-V_0-\sum_{k=1}^T \bar\xi_k \cdot \Delta \bar S_k}{N_T} \right)^2\right],\\
     {\rm subject~to}  &\qquad V_0 \in \mathbb{R},\\
    &\qquad \bar\xi ~is~predictable~and~satisfying ~\eqref{sf}.
    \end{split}
\ee
The following lemma establishes a relationship between the optimal hedging strategies for \eqref{3.4} and \eqref{8101}.
\begin{Lemma}\label{lem872}
Let $W^N_n = V_0 + \sum_{k=1}^n \xi^{(1)}_k \cdot \Delta S^N_k$,  $n\in\{0,\ldots,T\}$, and $W_n = V_0 + \sum_{k=1}^n \xi^{(2)}_k \cdot \Delta \bar S_k$, $n\in\{0,\ldots,T\}$, be the optimal self-financing wealth processes for \eqref{3.4} and \eqref{8101}, respectively. Then, we have 
\be\label{878}
\xi^{(1)}_n \cdot \Delta \bar S_n = \xi^{(2)}_n \cdot \Delta \bar S_n,\quad n\in\{1,\ldots, T\},
\ee
\be\label{879}
\xi^{(1)}_n \cdot \Delta   S^N_n = \xi^{(2)}_n \cdot \Delta   S^N_n,\quad n\in\{1,\ldots, T\}.
\ee
In particular, one can use the same strategy to optimize both \eqref{3.4} and \eqref{8101}.
\end{Lemma}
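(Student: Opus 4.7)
The plan is to leverage the identity $W^N=W/N$ supplied by Lemma \ref{lem871}, together with the componentwise relation $\bar S_k = N_k S^N_k$ (immediate from $\bar S = (1,S)$ and $S^N=(1/N,S/N)$), to evaluate each of the inner products $\xi^{(i)}_n\cdot\bar S_k$ and $\xi^{(i)}_n\cdot S^N_k$ at $k\in\{n-1,n\}$ directly in terms of $W_k$, and hence to see that they do not depend on $i\in\{1,2\}$. Telescoping between $k=n-1$ and $k=n$ will then produce \eqref{878} and \eqref{879}.

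First I would unwind the two self-financing constraints. Applying \eqref{873} to $\xi^{(1)}$ under $S^N$ and \eqref{872} to $\xi^{(2)}$ under $\bar S$ yields $\xi^{(1)}_n\cdot S^N_n = W^N_n$ and $\xi^{(2)}_n\cdot\bar S_n = W_n$ for each $n\in\{1,\ldots,T\}$. The self-financing conditions \eqref{sfN} and \eqref{sf} additionally give $\xi^{(1)}_n\cdot S^N_{n-1}=W^N_{n-1}$ and $\xi^{(2)}_n\cdot\bar S_{n-1}=W_{n-1}$ for $n\geq 2$, while at $n=1$ the analogous identities follow from the common initial condition $W_0=W^N_0=V_0$, which is consistent thanks to the normalization $N_0=1$ coming with Lemma \ref{lemCN}.

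Next I would cross between the two problems via the multiplicative identity $\bar S_k=N_kS^N_k$. For each $n\in\{1,\ldots,T\}$ and $k\in\{n-1,n\}$, one computes
\[
\xi^{(1)}_n\cdot\bar S_k \;=\; N_k\bigl(\xi^{(1)}_n\cdot S^N_k\bigr) \;=\; N_k W^N_k \;=\; W_k \;=\; \xi^{(2)}_n\cdot\bar S_k,
\]
where the third equality uses Lemma \ref{lem871}. Subtracting the $k=n-1$ identity from the $k=n$ identity delivers \eqref{878}. The symmetric computation $\xi^{(2)}_n\cdot S^N_k=(\xi^{(2)}_n\cdot\bar S_k)/N_k = W_k/N_k = W^N_k = \xi^{(1)}_n\cdot S^N_k$ at the same two values of $k$ yields \eqref{879} by the same telescoping. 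Finally, the closing remark of the lemma — that a single strategy optimizes both formulations — is merely a restatement of \eqref{878}--\eqref{879}, since the produced gains at each step coincide regardless of whether one uses $\xi^{(1)}$ or $\xi^{(2)}$.

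No substantive obstacle is anticipated. The only mild care needed is bookkeeping: one must track which self-financing condition (\eqref{sf} versus \eqref{sfN}) is being used, and verify the boundary case $n=1$ separately, where the normalization $N_0=1$ and the fact that Lemma \ref{lem871} forces the two problems to share the same optimal initial wealth $V_0$ close the loop.
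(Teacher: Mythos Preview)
Your argument is correct and actually cleaner than the paper's. The paper proceeds by expanding $\Delta W_n=\Delta(W^N_nN_n)$ via the discrete product rule, then likewise expands $\Delta\bar S_n=\Delta(S^N_nN_n)$, substitutes one into the other, and arrives at
\[
\xi^{(2)}_n\cdot\Delta\bar S_n \;=\; \xi^{(1)}_n\cdot\Delta\bar S_n \;+\;\Bigl(V_0+\textstyle\sum_{k=1}^n\xi^{(1)}_k\cdot\Delta S^N_k-\xi^{(1)}_n\cdot S^N_n\Bigr)\Delta N_n,
\]
after which the parenthesized term is killed by the self-financing identity \eqref{873}. Your route bypasses this product-rule bookkeeping: by reading the self-financing conditions as the pointwise identities $\xi^{(1)}_n\cdot S^N_k=W^N_k$ and $\xi^{(2)}_n\cdot\bar S_k=W_k$ for $k\in\{n-1,n\}$, and then using $\bar S_k=N_kS^N_k$ together with $W_k=N_kW^N_k$ from Lemma~\ref{lem871}, you show directly that $\xi^{(1)}_n\cdot\bar S_k=\xi^{(2)}_n\cdot\bar S_k$ at both endpoints, whence \eqref{878} by subtraction. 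The advantage of your approach is transparency: it makes clear that self-financing plus the scalar relation $\bar S=NS^N$ is all that is being used, and no increment expansion is needed. The paper's computation, on the other hand, makes more explicit where the self-financing cancellation occurs as a correction term proportional to $\Delta N_n$, which may be conceptually useful in related settings. Your handling of the boundary case $n=1$ via $N_0=1$ and the shared initial wealth is also correct.
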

\begin{Remark}
Lemmas \ref{lem871} and \ref{lem872} assert that, for replicable claims (that is, the ones that are represented by a terminal value of an element of $\bigcup\limits_{x\in\R}\cX(x)$) change of num\'eraire affects neither the fair price nor the hedging  strategy. 
\end{Remark}
\begin{proof}[Proof of Lemma \ref{lem872}]
%
%
%
%
    As, by Lemma \ref{lem871}, $N_nW^N_n =  W_n$, $n\in\{0,\ldots, T\}$, we get $$        \d W_n  =\Delta(W^N_n {N}_n) ,$$
    that is  
    \begin{equation}\label{875}
    \bs
         \xi_n^{(2)}\cdot\Delta{\bar S}_n 
        & = W_{n-1}^N \Delta{N}_n+N_{n-1} \Delta{W}^N_n+\Delta{W}_n^N \Delta{N}_n \\
        & = W_{n-1}^N\Delta{N}_n+(N_{n-1}+\Delta{N}_n)\xi_n^{(1)}\cdot\Delta{S}_n^N \\
        & = W_{n-1}^N\Delta{N}_n+\xi^{(1)}_n\cdot(N_{n-1}\Delta{S}^N_n+\Delta{S}_n^N\Delta{N}_n) .
                \end{split}
                \end{equation}
                As $\bar S_n = S^N_nN_n$, we get 
    \begin{equation}\nn
    \bs
       \d \bar S_n& = \d (S^N_n N_n)   =   S^N_n N_n - S^N_{n-1} N_{n-1}  =   S^N_{n-1} \d N_n + N_{n-1} \d S^N_n + \d S^N_n \d N_n,
                \end{split}
                \end{equation}                
                and thus
                $$ N_{n-1} \d S^N_n + \d S^N_n \d N_n = \d \bar S_n - S^N_{n-1} \d N_n .$$
                This allows to rewrite (particularly, the last term in)  \eqref{875} as 
    \begin{equation}\label{877}
    \bs 
         \xi_n^{(2)}\cdot\Delta{\bar S}_n & = W_{n-1}^N\Delta{N}_n+\xi^{(1)}_n\cdot(\Delta{\bar S}_{n-1}-S^N_{n-1}\Delta{N}_n)\\
        & = \xi_n^{(1)}\cdot\Delta{\bar S}_n+(W_n^N-\xi_n^{(1)}\cdot{S_n^N})\Delta{N}_n \\
        & =\xi_n^{(1)}\cdot\Delta{\bar S}_n+\(V_0+\sum^n_{k=1}\xi^{(1)}_k\cdot\Delta{S}^N_k-\xi^{(1)}_n\cdot{S}^N_n\)\Delta{N}_n.
            \end{split}
                \end{equation} 
    Notice that the self-financing condition for $\xi^{(1)}$, particularly in the form  \eqref{873}, implies that 
    \begin{equation*}
        V_0+\sum^n_{k=1}\xi^{(1)}_k\cdot\Delta{S}^N_k-\xi^{(1)}_n\cdot{S}^N_n=0,
    \end{equation*}
    which is precisely the term in the parentheses in the last line of \eqref{877}. This allows to rewrite \eqref{877}  as \eqref{878}. We obtain \eqref{879} similarly. 
 
\end{proof}

Next, we apply Lemmas \ref{lem871} and \ref{lem872} to characterize 
F{\"o}llmer-Schweizer decomposition under a change of  num\'eraire. 
For simplicity of notations, for random variables $X$ and $Y$, let us introduce 
\begin{equation}\label{3.6}
    \mathcal{C}^N_{\mathcal{F}_n}(X,Y) := \mathbb{E}_{\mathcal{F}_n} \left[ \left( \frac{X}{N_{T}} - \frac{\mathbb{E}_{\mathcal{F}_n} [XN_{T}^{-2}]}{N_{T} \mathbb{E}_{\mathcal{F}_n} [N_{T}^{-2}]} \right) \left( \frac{Y}{N_{T}} - \frac{\mathbb{E}_{\mathcal{F}_n} [YN_{T}^{-2}]}{N_{T} \mathbb{E}_{\mathcal{F}_n} [N_{T}^{-2}]} \right)\right],\quad n\in\{1, \ldots, T\}, 
\end{equation}
and consider the following matrix-valued process 
\be\label{defC}
\bs
    &\mathbfcal{C}_n :=(\mathcal{C}^{N}_{\mathcal{F}_{n-1}}(\Delta S_n^i, \Delta S_n^j))_{i=1,\ldots, d,j=1,\ldots, d},\quad  n \in \{1,...,T\}.
\end{split}
\ee
If $\mathbfcal{C}_n$ is invertible with probability $1$ for all $n$, we can define recursively, backward  in time, vector-valued processes $\xi$ and $\mathbf{c}$ as follows 
\be\label{3.7}
\bs
\xi_{n} &: = [\mathbfcal{C}_{n}]^{-1}  \mathbf{c}_{n} ,\quad n\in\{T,\ldots, 1\},\quad where \\
\mathbf{c}_n &:= \left\{
\begin{array}{lcl}
        (\mathcal{C}_{\mathcal{F}_{T-1}} ^{N}(H, \Delta S_T^i))_{i=1,\ldots, d},  &if& n =T,\\
        (\mathcal{C}_{\mathcal{F}_{n-1}} ^{N}(H - \sum_{k=n+1}^T \xi_k \cdot \Delta S_k, \Delta S_n^i))_{i=1,\ldots, d}, &if& n \in \{T-1,\ldots,1\}.
        \end{array}
        \right.
\end{split}
\ee
For $\xi$ given by \eqref{3.7}, let us also set 
\be\label{eqVn}\bs
    V_n & = \mathbb{E}_{\mathcal{F}_{n}}\[\(H-\sum_{k=n+1}^T \xi_k \cdot \Delta S_k\)  
   \frac{ N_{T}^{-2}}{\mathbb{E}_{\mathcal{F}_{n}}[N_{T}^{-2}]}\],\quad n \in \{0,\ldots,T-1\}.
    \end{split}
\ee
\begin{Theorem}\label{thm3.3}
Consider a model with $T$ periods and $d$ stocks. Let us consider $\mathbfcal{C}$, defined in \eqref{defC}, and assume that   $[\mathbfcal{C}_{n}]^{-1}$ exists for every $n  \in \{1,\ldots,T\}$. 
Then $V_0$, defined through \eqref{eqVn}, is the fair price, and $\xi$, defined in \eqref{3.7}, is the (fair price-based) hedging strategy under the num\'eraire $N$, that is the optimizers to \eqref{3.5}. We also have
\be\label{881}
    \frac{H}{N_T} =  V_0 + \sum_{k=1}^T \xi_k \cdot \Delta S^{N}_k +  L_T,
\ee
where $L$ is the unhedgeable part that satisfies properties $(1)$ and $(2)$ in the statement of Theorem \ref{thm2.2}.
\end{Theorem}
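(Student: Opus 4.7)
The plan is to reduce the analysis via Lemma~\ref{lem871} to the first-order optimality conditions for the quadratic programme \eqref{3.5}. Since $\Omega$ is finite, the objective of \eqref{3.5} is a quadratic form on the finite-dimensional space $\mathbb{R}\times\Theta$, and the hypothesis that each $\mathbfcal{C}_n$ is invertible makes it strictly convex; a unique minimiser therefore exists and is characterised by the first-order conditions obtained by differentiating in $V_0$ and in each $\mathcal{F}_{n-1}$-measurable coordinate of $\xi_n$:
\begin{align*}
\mathbb{E}[R_T\, N_T^{-2}] &= 0, \\
\mathbb{E}_{\mathcal{F}_{n-1}}[R_T\, \Delta S_n^i\, N_T^{-2}] &= 0, \quad n \in \{1,\ldots,T\},\; i \in \{1,\ldots,d\},
\end{align*}
where $R_T := H - V_0 - \sum_{k=1}^T \xi_k \cdot \Delta S_k$.

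The $V_0$-FOC coincides with \eqref{eqVn} at $n = 0$ (using triviality of $\mathcal{F}_0$), so the candidate $V_0$ satisfies it by construction. For the $\xi_n$-FOCs I would introduce the equivalent probability measure $\mathbb{Q}$ with $d\mathbb{Q}/d\mathbb{P} = N_T^{-2}/\mathbb{E}[N_T^{-2}]$; Bayes's rule then gives
$$\mathcal{C}^N_{\mathcal{F}_{n-1}}(X,Y) = \mathbb{E}_{\mathcal{F}_{n-1}}[N_T^{-2}] \cdot \mathrm{Cov}^{\mathbb{Q}}_{\mathcal{F}_{n-1}}(X,Y), \qquad V_n = \mathbb{E}^{\mathbb{Q}}_{\mathcal{F}_n}[Y_n],$$
where $Y_n := H - \sum_{k=n+1}^T \xi_k \cdot \Delta S_k$, so that \eqref{3.7} identifies $\xi_n$ as the conditional $\mathbb{Q}$-regression coefficient of $Y_n$ on $\Delta S_n$ given $\mathcal{F}_{n-1}$.

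The heart of the argument is a backward induction on $n$ establishing the identity
$$V_0 + \sum_{k=1}^{n-1} \xi_k \cdot \Delta S_k = V_{n-1}, \qquad n \in \{1,\ldots,T\},$$
that is, the self-financing wealth generated by the candidate strategy in the bank-account num\'eraire coincides with the conditional fair-price process. The base case is just the definition of $V_0$. For the inductive step I would combine the regression interpretation of $\xi_n$ with the invertibility of $\mathbfcal{C}_n$ to show that the $\mathcal{F}_n$-innovation of $V$ under $\mathbb{Q}$ equals $\xi_n \cdot (\Delta S_n - \mathbb{E}^{\mathbb{Q}}_{\mathcal{F}_{n-1}}[\Delta S_n])$, yielding $V_n - V_{n-1} = \xi_n \cdot \Delta S_n$. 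Substituting this identity into the $\xi_n$-FOC reduces it precisely to $\mathbfcal{C}_n \xi_n = \mathbf{c}_n$, which holds by \eqref{3.7}.

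Finally, for the decomposition \eqref{881} I would set $L_T := H/N_T - V_0 - \sum_{k=1}^T \xi_k \cdot \Delta S^N_k$ and extend $L$ backward as a $\mathbb{Q}$-martingale: property~(2), $\mathbb{E}[L_0] = 0$, follows directly from the $V_0$-FOC, while property~(1), the orthogonality with the martingale part of $S$, follows from the $\xi_n$-FOCs after rewriting them in terms of $\Delta S_n^N$ via the num\'eraire transformation. The main obstacle throughout is the inductive identity $V_0 + \sum_{k=1}^{n-1} \xi_k \cdot \Delta S_k = V_{n-1}$; the remainder of the argument is essentially Hilbert-space projection combined with elementary first-order calculus.
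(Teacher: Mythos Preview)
Your change-of-measure reformulation is correct and clarifying: with $d\mathbb{Q}/d\mathbb{P}\propto N_T^{-2}$, one indeed has $\mathcal{C}^N_{\mathcal{F}_{n-1}}(X,Y)=\mathbb{E}_{\mathcal{F}_{n-1}}[N_T^{-2}]\,\mathrm{Cov}^{\mathbb{Q}}_{\mathcal{F}_{n-1}}(X,Y)$ and $V_n=\mathbb{E}^{\mathbb{Q}}_{\mathcal{F}_n}[Y_n]$, so \eqref{3.7} identifies $\xi_n$ as the $\mathbb{Q}$-conditional regression of $Y_n$ on $\Delta S_n$.

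The argument breaks at the inductive identity $V_0+\sum_{k=1}^{n-1}\xi_k\cdot\Delta S_k=V_{n-1}$. Writing $V_{n-1}=\mathbb{E}^{\mathbb{Q}}_{\mathcal{F}_{n-1}}[V_n]-\xi_n\cdot\mathbb{E}^{\mathbb{Q}}_{\mathcal{F}_{n-1}}[\Delta S_n]$, your step requires
\[
V_n-\mathbb{E}^{\mathbb{Q}}_{\mathcal{F}_{n-1}}[V_n]=\xi_n\cdot\bigl(\Delta S_n-\mathbb{E}^{\mathbb{Q}}_{\mathcal{F}_{n-1}}[\Delta S_n]\bigr).
\]
But the regression characterisation only yields that the residual $Y_n-\mathbb{E}^{\mathbb{Q}}_{\mathcal{F}_{n-1}}[Y_n]-\xi_n\cdot(\Delta S_n-\mathbb{E}^{\mathbb{Q}}_{\mathcal{F}_{n-1}}[\Delta S_n])$ is $\mathbb{Q}$-orthogonal to $\Delta S_n$ given $\mathcal{F}_{n-1}$; it says nothing about its $\mathcal{F}_n$-conditional expectation vanishing. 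In any genuinely incomplete step (e.g.\ the paper's own trinomial model, where $\mathcal{F}_1$ has three atoms but $d=1$), the $\mathcal{F}_n$-measurable random variable $V_n$ need not lie in the affine span of $\Delta S_n$ over $\mathcal{F}_{n-1}$, so the displayed equality fails. Consequently the $\xi_n$-FOC $\mathbb{E}^{\mathbb{Q}}_{\mathcal{F}_{n-1}}[R_T\,\Delta S_n^i]=0$ reduces, after your regression identity, to $(V_{n-1}-W_{n-1})\,\mathbb{E}^{\mathbb{Q}}_{\mathcal{F}_{n-1}}[\Delta S_n^i]=0$ with $W_{n-1}:=V_0+\sum_{k<n}\xi_k\cdot\Delta S_k$, and you have not established $V_{n-1}=W_{n-1}$.

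The paper avoids this obstacle by \emph{not} working with the global first-order conditions at all. Instead it solves, backward in $n$, the local problem $\min_{V_n,\xi_{n+1}}\mathbb{E}_{\mathcal{F}_n}[(\tilde V_{n+1}-V_n-\xi_{n+1}\cdot\Delta S_{n+1})^2/N_T^2]$, treating $V_n$ as a \emph{free} $\mathcal{F}_n$-measurable variable rather than as the running wealth $W_n$. The one-step first-order conditions of that local problem are exactly \eqref{eqVn} and \eqref{3.7}; the passage to the decomposition \eqref{881} is then deferred to Lemmas~\ref{lem871} and~\ref{lem872}. Your global-FOC approach and the paper's local approach coincide only when $S$ is a $\mathbb{Q}$-martingale; outside that case, the identity you need is precisely what distinguishes local risk minimisation from global mean-variance hedging.

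A secondary issue: you propose to extend $L$ backward as a $\mathbb{Q}$-martingale, but properties~(1) and~(2) in Theorem~\ref{thm2.2} are stated under $\mathbb{P}$ (in particular $L$ must be a $\mathbb{P}$-martingale orthogonal to $M$ under $\mathbb{P}$). A $\mathbb{Q}$-martingale is not a $\mathbb{P}$-martingale in general, so this step also needs repair.
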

\begin{Remark}
The process $V$ defined in \eqref{eqVn} can be thought as the {\it conditional fair price process under the num\'eraire $N$}, where $V_n$, $n\in\{0,\ldots,T-1\}$,  is the fair price under the information available up to time $n$. We can further extend $V$ to $T$, by setting $V_T = \frac{H}{N_T}$.  In particular, at $n = 0$, $V_0$ is the conditional fair price under trivial information represented by $\cF_0$. This is consistent with the (usual, or rather unconditional) fair price.  
\end{Remark}
\begin{Remark}
Invertibility of  $\mathbfcal{C}_n$'s for every $n  \in \{1,\ldots,T\}$ is closely connected to the non-redundancy of  $d$ stocks. 
\end{Remark}
\begin{proof}[Proof of Theorem \ref{thm3.3}]
    
The proof proceeds recursively, backward  in time. For brevity of the exposition, we focus on the main step and consider the minimization problem at time $n\in\{0,\ldots,T-1\}$, and if $n + 1<T$, assume we have already found $\xi_{n+2},\ldots,\xi_T$.  
Let us denote  
\be\label{8715}
\tilde V_{n+1} := H - \sum_{k=n+2}^T \xi_k \cdot \Delta S_k.
\ee
Now, 
at time $n$, we want to minimize 
\begin{equation}\label{3.8}
    \mathbb{E}_{\mathcal{F}_n} \left[ \bigg( \frac{\tilde V_{n+1} - V_n - \xi_{n+1} \cdot \Delta S_{n+1}}{N_{T}} \bigg)^2 \right],  
\end{equation}
where the minimization is taken over all random variables $V_n$ and $\xi_{n+1}$ measurable with respect to $\cF_{n}$.

Using the first-order conditions, we take the partial derivative of the objective function in \eqref{3.8} with respect to $V_n$, to obtain 
\begin{equation}\nn
\bs
    \frac{\partial}{\partial{V_n}}  \mathbb{E}_{\mathcal{F}_{n}} \left[ \bigg( \frac{\tilde V_{n+1} - V_n - \xi_{n+1} \cdot \Delta S_{n+1}}{N_{T}} \bigg)^2 \right] =-2\mathbb{E}_{\mathcal{F}_{n}} \left[ \frac{\tilde V_{n+1}-V_n-\xi_{n+1} \cdot \Delta{S}_{n+1}}{N_{T}^2} \right]=0,& 
     \end{split}
\end{equation}
and therefore, we get 
$$ V_n = \frac{\mathbb{E}_{\mathcal{F}_{n}}[\tilde V_{n+1} N_T^{-2}] - \xi_{n+1} \cdot \mathbb{E}_{\mathcal{F}_{n}}[\Delta S_{n+1} N_{T}^{-2}]}{\mathbb{E}_{\mathcal{F}_{n}}[N_{T}^{-2}]},$$
which is, in view of \eqref{8715}, is exactly \eqref{eqVn}. 
Next we substitute this $V_n$ back into our objective function \eqref{3.8} and take partial derivatives with respect to each component of $\xi$, to obtain
\begin{equation*}\bs
\frac{\partial}{\partial{\xi^j}} \mathbb{E}_{\mathcal{F}_n} \left[ \bigg( \frac{\tilde V_{n+1}}{N_{T}} - \frac{\mathbb{E}_{\mathcal{F}_n}[\tilde V_{n+1}N_{T}^{-2}]}{N_{T} \mathbb{E}_{\mathcal{F}_n}[N_{T}^{-2}]} - \xi_{n+1} \cdot \bigg( \frac{\Delta S_{n+1}}{N_{T}} - \frac{\mathbb{E}_{\mathcal{F}_n}[\Delta S_{n+1} N_{T}^{-2}]}{N_{T} \mathbb{E}_{\mathcal{F}_n}[N_{T}^{-2}]} \bigg) \bigg)^2 \right] = 0,\\
j \in \{1,\ldots,d\}.
\end{split}
\end{equation*}
Computing these derivatives and using the notation specified in \eqref{3.6}, we find that
\begin{equation*}
    \sum_{i=1}^d \xi^i_{n+1} \mathcal{C}^N_{\mathcal{F}_n}(\Delta S^i, \Delta S^j) = \mathcal{C}^N_{\mathcal{F}_n} (\tilde V_{n+1}, \Delta S^j),\quad  j \in \{1,...,d\}.
\end{equation*}
Recalling the notations for $\mathbfcal{C}$ in \eqref{defC} and for $ \mathbf{c}$ in \eqref{3.7}, we can rewrite the latter equation as 
\begin{equation*}
    \mathbfcal{C}_{n+1}   \xi_{n+1} = \mathbf{c}_{n+1}.
\end{equation*}
Now, using the assumed invertibility of $\mathbfcal{C}_{n+1}$, we have  
\begin{equation*}
    \xi_{n+1} = [\mathbfcal{C}_{n+1}]^{-1}   \mathbf{c}_{n+1},
\end{equation*}
which gives $\xi$ in \eqref{3.7}. Now, from Lemmas \ref{lem871} and \ref{lem872}, one can now show \eqref{881}. 
\end{proof}

\section{Examples}\label{secExam}
To illustrate the results  and to highlight some special features related to fair pricing under the change of num\'eraire,  we consider the following examples, where we find the optimal trading strategy and the corresponding fair price 
using Theorem \ref{thm3.3}. Key features of the results can be illustrated in a one-period trinomial model with one risky asset. Let the initial stock price $S_0 = 2$. Also let an increase in stock price happen by the factor of $u=2$, a lack of movement be $c=1$, and the down movement occur by the factor $d=1/2$ so that $uS_0=4$, $cS_0=2$, and $dS_0=1$ and so on. 
Let the probability of an up move occurring be $\frac{1}{6}$, the probability of a down move be $\frac{1}{3}$, and the probability of the stock staying steady be $\frac{1}{2}$. 
\begin{figure}[h]\label{fig1}
 \centering
    \begin{tikzpicture}[line cap=round,line join=round,>=triangle 45,x=1cm,y=1cm];
    \draw[color=black] (0, 0) [xshift=-17pt] node {$S_0 = 2$};
    \draw [-stealth](0,0) -- (2,1);
    \draw[color=black] (2, 1) [xshift= 20pt] node {$uS_0 = 4$};
    \draw [-stealth](0,0) -- (2,0);
    \draw[color=black] (2, 0) [xshift= 20pt] node {$cS_0 = 2$};
    \draw [-stealth](0,0) -- (2,-1);
    \draw[color=black] (2, -1) [xshift= 20pt] node {$dS_0 = 1$};
    
    \end{tikzpicture}
    \caption{Figure 1. One-period Trinomial Model.}
    \label{trinomial}
\end{figure}
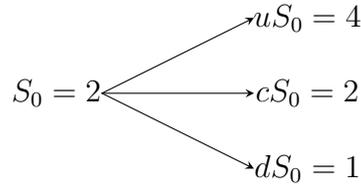    


The following example shows that for non-replicable contingent claims, the fair price and the optimal strategy in the sense of   optimization problem \eqref{3.5} change with $N$.  

\begin{Example}\label{ex1}
We demonstrate the results of section 3 as applied to the trinomial model, obtaining the optimal hedging strategy and initial capital once a change of num{\'e}raire has been enacted. 

(a) Consider a tradeable num{\'e}raire given by $1+\frac{1}{2}\Delta{S}$. Let $H$ be a European call option which yields $\max\{0,K-S_{1}\}$ where $S_{1}$ is the value of the stock at time one, $K$ is the strike price, and $K=3$. Then, using Theorem \ref{thm3.3} for the one-period  one-stock case, we deduce 
\begin{equation*}
\xi_1
=\frac{\mathcal{C}^N_{\cF_0}(H,\Delta{S}_1)}{\mathcal{C}^N_{\cF_0}(\Delta{S}_1, \d S_1)}
=\frac{\mathbb{E}\[\frac{\(H\mathbb{E}\[N^{-2}_1\]-\mathbb{E}\[HN^{-2}_1\]\)\(\Delta{S}_1\mathbb{E}\[N^{-2}_1\]-\mathbb{E}[\Delta{S}_1N^{-2}_1]\)}{\(N_1\mathbb{E}\[N^{-2}_1\]\)^2}\]}
{\mathbb{E}\[\(\frac{\Delta{S_1}\mathbb{E}\[N^{-2}_1\]-\mathbb{E}\[\Delta{S}_1N^{-2}_1\]}{N_1\mathbb{E}\[N^{-2}_1\]}\)^2\]}.
\end{equation*}
Computing the expectations using the values above  gives
\begin{equation*}
\E[N^{-2}_1]
=\left(\frac{1}{1+\frac{1}{2}(2)}\right)^{2}\frac{1}{6}
+\left(\frac{1}{1+\frac{1}{2}(0)}\right)^2\frac{1}{2}
+\left(\frac{1}{\frac{1}{2}}\right)^2\frac{1}{3}
=\frac{45}{24}.
\end{equation*}
Substitution yields 
\begin{equation*}
    \xi=\frac{75}{576}\approx0.13021.
\end{equation*}
This would indicate that the optimal amount of stock to buy at time zero is approximately 0.13021 of a share. The same process can be used to find $V_0$. Calculating the expectations gives 
\begin{equation*}
    V_0 = \frac{471}{4320} \approx 0.10903 .
\end{equation*}

(b) {\it Comparison with Optimal Strategy when $N\equiv 1$.}
For comparison, we now assume the num{\'e}raire $N\equiv 1$ and use the original one-period, one-stock formulas for the optimal trading strategy and initial value. The choice of  $N\equiv 1$ simplifies the formulas for the optimal trading strategy $\xi$ and the fair price $V_0$ to the following equations for the one-period, one-stock case:
\begin{equation}
\label{879}
\xi_1
=\frac{\text{Cov}(H,\Delta{S}_1)}{\text{Var}(\Delta{S}_1)}
=\frac{\mathbb{E}\[(H-\mathbb{E}[H])(\Delta{S}_1-\mathbb{E}[\Delta{S}_1])\]}
{\mathbb{E}[(\Delta{S}_1-\mathbb{E}[\Delta{S}_1])^2]}\quad and \quad 
    V_0=\mathbb{E}[H]-\xi_1\mathbb{E}[\Delta{S}_1].
\end{equation}
Using the same method and information described above, the optimal trading strategy under the num{\'e}raire $N\equiv 1$ is $\frac{1}{3}$ and the fair price is $\frac{1}{6}$. Notice that this differs from the optimal strategy and fair price under a different num{\'e}raire. 
Thus, the choice of the num\'eraire affect the optimal the optimal strategy and the fair price. 
\end{Example}

The following example shows that changes of num\'eraire are different from the changes of interest rates, in general. 

\begin{Example}

There is the relatively common supposition that a change in the num{\'e}raire relates to a change in the interest rate, or at the very least, that perturbations of both of them act similarly. In this example, we include calculations on the interest rate in order to lay the groundwork for the understanding that perturbations of the interest rate and perturbations of the num{\'e}raire are not related, in general. In one-period settings, let us consider the model of the stock price as in Example \ref{ex1}, and let us suppose that the interest rate is a constant $r>0$ (instead of 0, as in  part (b) of Example \ref{ex1}). Formulating the optimization problem similar to \eqref{3.5}, where the role of $N$ is played by the bank account, leads to the computations performed  in  \cite[Section 2]{FS89}), which assert that the optimal $\xi$ does not change compared to \eqref{879}, whereas the fair price $V_0$ is given by  
\begin{equation*}
    V_0=\mathbb{E}\[\frac{H}{1+r}\]-\xi_1 \mathbb{E}\[\frac{S_1}{1+r}-S_0\],
\end{equation*}
which is different from $V_0$ specified through Theorem \ref{thm3.3}, in general, even notationally, as $r$ does not enter \eqref{3.5}. This simple example shows that the perturbations of the interest rate are different from the perturbations of the num\'eraire. 
\end{Example}
\section{Stability under Perturbations of the Num{\'e}raire}\label{secStab}

We now address the stability of the F{\"o}llmer-Schweizer decomposition under perturbations of the num{\'e}raire. Stability has already been shown for perturbations of $V_T$ in \cite{monatStricker} and for perturbations of $S$ in \cite{MostovyiREU2019}. 
We   consider a family of $\mathbb{F}$-adapted strictly positive num{\'e}raire processes parameterized by $\epsilon$, writing $(N^\epsilon)_{\epsilon \in (-\epsilon_0, \epsilon_0)}$ for some $\epsilon_0 > 0$. A specific example of such a family is given in the following section, but for now we only suppose that
\begin{equation}\label{5.1}
    \lim_{\epsilon \to 0} N^\epsilon_n(\omega) = N^0_n(\omega) = 1 \hspace{3mm} \text{for every} \hspace{2mm} n \in \{0,...,T\} \hspace{2mm} \text{and} \hspace{2mm} \omega \in \Omega.  
\end{equation}
\begin{Remark}
We stress that, 
in \eqref{5.1} and below, the limits should be understood in the following sense. First, we fix $n$ and $\omega$, then we take a limit as $\e \to 0$. Thus, the limit in \eqref{5.1} and other limits below, hold for every $\omega\in\Omega$. In particular, the set of $\omega$'s, for which \eqref{5.1} and other limits below exist, has probability $1$.  
\end{Remark}
\begin{Theorem}\label{thm5.1}
For some $\epsilon_0>0$, let us consider a family of num{\'e}raire processes\footnote{Strict positivity for each $\epsilon \in (-\epsilon_0, \epsilon_0)$ is embedded in the definition of the num\'eraires.} \newline
$((N^\epsilon_n)_{n \in \{0,...,T\}})_{\epsilon \in (-\epsilon_0, \epsilon_0)}$ satisfying \eqref{5.1}.  Let us suppose that,  for every $n \in \{1,...,T\}$,  $\({Cov}_{\mathcal{F}_{n-1}}(\Delta S_n^i, \Delta S_n^j)\)_{i=1,\ldots, d,j=1,\ldots,d}$, is invertible  with probability $1$\footnote{This conditional covariance matrix-valued process is  $\mathbfcal{C}$ defined in \eqref{defC} for $N^0 \equiv 1$.}. Then there exists $\bar\e_0>0\in(0, \e_0]$, such that  for every random variable $H$ and $\e\in(-\bar\e_0, \bar\e_0)$, the assumptions of Theorem \ref{thm3.3} are satisfied. Further, the corresponding family of num{\'e}raire adjusted F{\"o}llmer-Schweizer  decompositions\footnote{These decompositions are  given via \eqref{881} in Theorem \ref{thm3.3}.}
\begin{equation*}
    \frac{H}{N^\epsilon_T} = V_0^\epsilon + \sum_{k=1}^T \xi_k^\epsilon \cdot \Delta S_k^{N^\epsilon} + L^\epsilon_T, \hspace{3mm} \epsilon \in (-\bar\epsilon_0, \bar\epsilon_0),
\end{equation*}
 and where $\xi^\e$'s are given via  \eqref{3.7},
satisfies
\begin{align*}
    &\lim_{\epsilon \to 0} \xi_n^\epsilon  = \xi_n^0, \hspace{3mm} n \in \{1,...,T\}, \\
    &\lim_{\epsilon \to 0} \Delta S_n^{N^\epsilon}  = \Delta \bar S_n, \hspace{3mm} n \in \{1,...,T\} ,\\
    &\lim_{\epsilon \to 0} V_0^\epsilon  = V_0^0, \\
    &\lim_{\epsilon \to 0} L_n^\epsilon  = L_n^0, \hspace{3mm} n \in \{0,...,T\}.
\end{align*}
\end{Theorem}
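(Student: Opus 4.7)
The plan is to exploit the finiteness of $\Omega$ to reduce the entire theorem to continuity of elementary operations—division, matrix inversion, conditional expectation—combined with a backward induction on the recursion~\eqref{3.7}. The choice of $\bar\e_0$ will depend only on $S$ and on the family $(N^\e)$, not on the particular $H$.

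Since $\Omega$ is finite, the pointwise convergence $N^\e_n(\omega)\to 1$ in~\eqref{5.1} is automatically uniform in $(n,\omega)$. Consequently, there is $\bar\e_0\in(0,\e_0]$ such that $N^\e_n(\omega)\in(1/2,3/2)$ for all $\e\in(-\bar\e_0,\bar\e_0)$, $n\in\{0,\ldots,T\}$, $\omega\in\Omega$, so every ratio appearing in~\eqref{3.6} is continuous in $\e$ at $\e=0$. Direct inspection of~\eqref{3.6} shows that, when $N\equiv 1$, $\cC^N_{\cF_{n-1}}(X,Y)=\text{Cov}_{\cF_{n-1}}(X,Y)$; hence $\mathbfcal{C}^\e_n\to\mathbfcal{C}^0_n$ entrywise at every $\omega\in\Omega$. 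The probability-one invertibility of $\mathbfcal{C}^0_n$, by finiteness of $\Omega$, is in fact pointwise-for-every-$\omega$ invertibility, so by continuity of matrix inversion on the open set of invertible matrices I may shrink $\bar\e_0$ to secure invertibility of $\mathbfcal{C}^\e_n(\omega)$ for all $n\in\{1,\ldots,T\}$, $\omega\in\Omega$, and $\e\in(-\bar\e_0,\bar\e_0)$, with $[\mathbfcal{C}^\e_n]^{-1}\to[\mathbfcal{C}^0_n]^{-1}$ pointwise. This verifies the hypothesis of Theorem~\ref{thm3.3} for every such $\e$.

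Next, I would prove $\xi^\e_n\to\xi^0_n$ by downward induction on $n$ in~\eqref{3.7}. For $n=T$, $\mathbf{c}^\e_T$ depends continuously on $N^\e$ through~\eqref{3.6}, so $\mathbf{c}^\e_T\to\mathbf{c}^0_T$ and hence $\xi^\e_T=[\mathbfcal{C}^\e_T]^{-1}\mathbf{c}^\e_T\to\xi^0_T$. Assuming $\xi^\e_{n+1},\ldots,\xi^\e_T$ converge to their $\e=0$ counterparts, the random variable $H-\sum_{k=n+1}^T\xi^\e_k\cdot\Delta S_k$ converges pointwise, continuity of~\eqref{3.6} in its first argument delivers $\mathbf{c}^\e_n\to\mathbf{c}^0_n$, and therefore $\xi^\e_n\to\xi^0_n$. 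Applying~\eqref{eqVn} at $n=0$ to the converging $\xi^\e$ and $N^\e_T$ produces $V^\e_0\to V^0_0$. The convergence $\Delta S^{N^\e}_n\to\Delta\bar S_n$ is immediate from $S^{N^\e}=(1/N^\e,S/N^\e)$. Finally, solving~\eqref{881} for $L^\e_T$ and using every limit already established yields $L^\e_T\to L^0_T$; the martingale property of $L^\e$ with $L^\e_0=0$ (cf.\ Theorem~\ref{thm2.2}) gives $L^\e_n=\E_{\cF_n}[L^\e_T]$, and on the finite space the conditional expectation operator has a fixed linear matrix representation that commutes with pointwise limits, so $L^\e_n\to L^0_n$ for every $n$.

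The main obstacle I anticipate is the uniformity built into the invertibility claim: securing a single $\bar\e_0$ that handles every $\omega\in\Omega$ and every $n\in\{1,\ldots,T\}$ simultaneously, and ensuring that it is independent of the particular contingent claim $H$. Once the finiteness of $\Omega$ converts the almost-sure invertibility of $\mathbfcal{C}^0_n$ into pointwise-for-all-$\omega$ invertibility and the finiteness of $\{1,\ldots,T\}$ permits a finite intersection of admissible $\e$-neighborhoods, the remaining convergences reduce to routine continuity of algebraic operations.
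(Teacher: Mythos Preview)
Your proposal is correct and follows essentially the same approach as the paper: both arguments proceed by backward induction on the recursion~\eqref{3.7}, use continuity of~\eqref{3.6} together with invertibility of $\mathbfcal{C}^0_n$ to obtain a common $\bar\e_0$ on which $\mathbfcal{C}^\e_n$ remains invertible, and then deduce the convergence of $V_0^\e$ and $L^\e_n$ from the decomposition~\eqref{881} and the martingale property of $L^\e$. The only cosmetic difference is that you invoke~\eqref{eqVn} directly for $V_0^\e$, whereas the paper takes the unconditional expectation of~\eqref{881} and uses $\E[L^\e_T]=0$; both routes are equivalent.
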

\begin{proof}
The proof goes recursively, backward  in $n$. First, consider $n=T$. Notice that since we are working on a finite probability space, via the definition of conditional expectation, from \eqref{5.1}, we get
\begin{equation*}
    \lim_{\epsilon \to 0} \mathbb{E}_{\mathcal{F}_{T-1}}[N^\epsilon_T] = \mathbb{E}_{\mathcal{F}_{T-1}}[N^0_T] = 1,
\end{equation*}
and by continuity of $f(x)=x^{-2}$ on $(0, \infty)$, we obtain  
\begin{equation*}
    \lim_{\epsilon \to 0} \mathbb{E}_{\mathcal{F}_{T-1}}[(N^\epsilon_T)^{-2}] = \mathbb{E}_{\mathcal{F}_{T-1}}[(N^0_T)^{-2}] = 1.
\end{equation*}
Moreover, this implies for random variables $X$ and $Y$, both not depending on $\epsilon$, we have 
\be\label{5.2}
\bs
    \lim_{\epsilon \to 0} \mathcal{C}^{N^\epsilon_T}_{\mathcal{F}_{T-1}}[X,Y] & = \lim_{\epsilon \to 0} \mathbb{E}_{\mathcal{F}_{T-1}}\left[\left(\frac{X}{N^\epsilon_T} - \frac{\mathbb{E}_{\mathcal{F}_{T-1}}[X(N^\epsilon_T)^{-2}]}{N^\epsilon_T \mathbb{E}_{\mathcal{F}_{T-1}}[(N^\epsilon_T)^{-2}]} \right) \left(\frac{Y}{N^\epsilon_T} - \frac{\mathbb{E}_{\mathcal{F}_{T-1}}[Y(N^\epsilon_T)^{-2}]}{N^\epsilon_T \mathbb{E}_{\mathcal{F}_{T-1}}[(N^\epsilon_T)^{-2}]} \right) \right] \\
    & = \mathbb{E}_{\mathcal{F}_{T-1}}\left[\lim_{\epsilon \to 0}\left(\frac{X}{N^\epsilon_T} - \frac{\mathbb{E}_{\mathcal{F}_{T-1}}[X(N^\epsilon_T)^{-2}]}{N^\epsilon_T \mathbb{E}_{\mathcal{F}_{T-1}}[(N^\epsilon_T)^{-2}]} \right) \left(\frac{Y}{N^\epsilon_T} - \frac{\mathbb{E}_{\mathcal{F}_{T-1}}[Y(N^\epsilon_T)^{-2}]}{N^\epsilon_T \mathbb{E}_{\mathcal{F}_{T-1}}[(N^\epsilon_T)^{-2}]} \right) \right] \\
    & = \mathbb{E}_{\mathcal{F}_{T-1}}\bigg[\left(\frac{X}{\lim_{\epsilon \to 0}N^\epsilon_T} - \frac{\lim_{\epsilon \to 0}\mathbb{E}_{\mathcal{F}_{T-1}}[X(N^\epsilon_T)^{-2}]}{\lim_{\epsilon \to 0}N^\epsilon_T \lim_{\epsilon \to 0}\mathbb{E}_{\mathcal{F}_{T-1}}[(N^\epsilon_T)^{-2}]} \right) \\ & \hspace{1.68cm} \left(\frac{Y}{\lim_{\epsilon \to 0}N^\epsilon_T} - \frac{\lim_{\epsilon \to 0}\mathbb{E}_{\mathcal{F}_{T-1}}[Y(N^\epsilon_T)^{-2}]}{\lim_{\epsilon \to 0}N^\epsilon_T \lim_{\epsilon \to 0}\mathbb{E}_{\mathcal{F}_{T-1}}[(N^\epsilon_T)^{-2}]} \right) \bigg]\\
    & = \mathbb{E}_{\mathcal{F}_{T-1}}\left[\left(\frac{X}{N^0_T} - \frac{\mathbb{E}_{\mathcal{F}_{T-1}}[X(N^0_T)^{-2}]}{N^0_T \mathbb{E}_{\mathcal{F}_{T-1}}[(N^0_T)^{-2}]} \right) \left(\frac{Y}{N^0_T} - \frac{\mathbb{E}_{\mathcal{F}_{T-1}}[Y(N^0_T)^{-2}]}{N^0_T \mathbb{E}_{\mathcal{F}_{T-1}}[(N^0_T)^{-2}]} \right) \right]\\
    & = \mathbb{E}_{\mathcal{F}_{T-1}}[(X - \mathbb{E}_{\mathcal{F}_{T-1}}[X])(Y - \mathbb{E}_{\mathcal{F}_{T-1}}[Y])]\\
    & = \text{Cov}_{\mathcal{F}_{T-1}}[X,Y],  
    \end{split}
\ee
which holds for every $\omega \in \Omega$. 
Thus, the continuity of $ \mathcal{C}^{N^\epsilon_T}_{\mathcal{F}_{T-1}}[X,Y] $ in $\e$, and invertibility of $\mathcal{C}^{N^0_n}$, for every $n\in\{1,\ldots, T\}$, imply that there exists $\bar \e_0\in(0, \e_0)$, such that $\mathcal{C}^{N^\e_n}$ is invertible for every $\e\in(-\bar \e_0, \bar \e_0)$ and every $n\in\{1,\ldots, T\}$. Such invertibility also implies continuity of $ [(\mathcal{C}^{N^\epsilon_T}_{\mathcal{F}_{T-1}}[\Delta S^i_T,\Delta S^j_T])_{i=1,j=1}^d]^{-1}$ at $\e = 0$, that is $$\lim_{\epsilon \to 0}[(\mathcal{C}^{N^\epsilon_T}_{\mathcal{F}_{T-1}}[\Delta S^i_T,\Delta S^j_T])_{i=1,j=1}^d]^{-1} = 
[(\mathcal{C}^{N^0_T}_{\mathcal{F}_{T-1}}[\Delta S^i_T,\Delta S^j_T])_{i=1,j=1}^d]^{-1}.$$
Consequently, from \eqref{5.2}, we deduce that 
 \be\label{5.3}
\bs
    \lim_{\epsilon \to 0} \xi^\epsilon_T & = \lim_{\epsilon \to 0}[(\mathcal{C}^{N^\epsilon_T}_{\mathcal{F}_{T-1}}[\Delta S^i_T,\Delta S^j_T])_{i=1,j=1}^d]^{-1}(\mathcal{C}^{N^\epsilon_T}_{\mathcal{F}_{T-1}}[H, \Delta S_T^i])_{i=1}^d \\
    & = [(\lim_{\epsilon \to 0} \mathcal{C}^{N^\epsilon_T}_{\mathcal{F}_{T-1}}[\Delta S^i_T,\Delta S^j_T])_{i=1,j=1}^d]^{-1}(\lim_{\epsilon \to 0} \mathcal{C}^{N^\epsilon_T}_{\mathcal{F}_{T-1}}[H, \Delta S_T^i])_{i=1}^d \\
    & = [(\text{Cov}_{\mathcal{F}_{T-1}}[\Delta S^i_T,\Delta S^j_T])_{i=1,j=1}^d]^{-1}(\text{Cov}_{\mathcal{F}_{T-1}}[H, \Delta S_T^i])_{i=1}^d \\
    & = \xi^0_T  ,
    \end{split}
\ee
for every $\omega \in \Omega$. If $T=1$, this completes the proof for stability of $\xi$. If $T>1$, defining
\begin{equation*}
    A^\epsilon_n := H - \sum_{k=n+1}^T \xi^\epsilon_k \cdot \Delta   S^{N^\e}_k, \hspace{3mm} n \in \{0,...,T-1\},\quad \epsilon \in (-\epsilon_0,\epsilon_0),
\end{equation*}
from \eqref{5.3}, we get
\begin{equation*}
    \lim_{\epsilon \to 0} A^\epsilon_{T-1} = A^0_{T-1}, \hspace{4mm} \omega \in \Omega.
\end{equation*}
Consequently, as with \eqref{5.2}, we obtain 
\begin{align*}
     \lim_{\epsilon \to 0}  \mathcal{C}^{N^\epsilon_{T}}_{\mathcal{F}_{T-2}}[A^\epsilon_{T-1},\Delta S_{T-1}] & =\lim_{\epsilon \to 0}  \mathbb{E}_{\mathcal{F}_{T-2}}\bigg[\left(\frac{A^\epsilon_{T-1}}{N^\epsilon_{T}} - \frac{\mathbb{E}_{\mathcal{F}_{T-2}}[A^\epsilon_{T-1}(N^\epsilon_{T})^{-2}]}{N^\epsilon_{T} \mathbb{E}_{\mathcal{F}_{T-2}}[(N^\epsilon_{T})^{-2}]} \right) \\
     & \hspace{2.26cm}\left(\frac{\Delta S_{T-1}}{N^\epsilon_{T}} - \frac{\mathbb{E}_{\mathcal{F}_{T-2}}[\Delta S_{T-1}(N^\epsilon_{T})^{-2}]}{N^\epsilon_{T} \mathbb{E}_{\mathcal{F}_{T-2}}[(N^\epsilon_{T})^{-2}]} \right) \bigg] \\
     & =  \mathbb{E}_{\mathcal{F}_{T-2}}\bigg[\left(\frac{\lim_{\epsilon \to 0}A^\epsilon_{T-1}}{\lim_{\epsilon \to 0}N^\epsilon_{T}} - \frac{\lim_{\epsilon \to 0}\mathbb{E}_{\mathcal{F}_{T-2}}[A^\epsilon_{T-1}(N^\epsilon_{T})^{-2}]}{\lim_{\epsilon \to 0}N^\epsilon_{T} \mathbb{E}_{\mathcal{F}_{T-2}}[(N^\epsilon_{T})^{-2}]} \right) \\ &  \hspace{1.68cm} \left(\frac{\Delta S_{T-1}}{\lim_{\epsilon \to 0}N^\epsilon_{T}} - \frac{\lim_{\epsilon \to 0}\mathbb{E}_{\mathcal{F}_{T-2}}[\Delta S_{T-1}(N^\epsilon_{T})^{-2}]}{\lim_{\epsilon \to 0}N^\epsilon_{T} \mathbb{E}_{\mathcal{F}_{T-2}}[(N^\epsilon_{T})^{-2}]} \right) \bigg] \\
     & = \mathbb{E}_{\mathcal{F}_{T-2}}[(A^0_{T-1} - \mathbb{E}_{\mathcal{F}_{T-2}}[A^0_{T-1}])(\Delta S_{T-1} - \mathbb{E}_{\mathcal{F}_{T-2}}[\Delta S_{T-1}])] \\
     & = \text{Cov}_{\mathcal{F}_{T-2}}[A^0_{T-1}, \Delta S_{T-1}],\quad     \omega \in \Omega,
\end{align*}
and as with \eqref{5.3}, we obtain the vector equation
\begin{align*}
    \lim_{\epsilon \to 0} \xi^\epsilon_{T-1} & = \lim_{\epsilon \to 0}[(\mathcal{C}^{N^\epsilon_{T}}_{\mathcal{F}_{T-2}}[\Delta S^i_{T-1},\Delta S^j_{T-1}])_{i=1,j=1}^d]^{-1}(\mathcal{C}^{N^\epsilon_{T}}_{\mathcal{F}_{T-2}}[A^\epsilon_{T-1}, \Delta S_{T-1}^i])_{i=1}^d \\
    & = [(\lim_{\epsilon \to 0} \mathcal{C}^{N^\epsilon_{T}}_{\mathcal{F}_{T-2}}[\Delta S^i_{T-1},\Delta S^j_{T-1}])_{i=1,j=1}^d]^{-1}(\lim_{\epsilon \to 0} \mathcal{C}^{N^\epsilon_{T}}_{\mathcal{F}_{T-2}}[A^\epsilon_{T-1}, \Delta S_{T-1}^i])_{i=1}^d \\
    & = [(\text{Cov}_{\mathcal{F}_{T-2}}[\Delta S^i_{T-1},\Delta S^j_{T-1}])_{i=1,j=1}^d]^{-1}(\text{Cov}_{\mathcal{F}_{T-2}}[A^0_{T-1}, \Delta S_{T-1}^i])_{i=1}^d \\
    & = \xi^0_{T-1},\quad  \omega \in \Omega. 
\end{align*}
Proceeding in this manner, one can show
\begin{equation}\label{8711}
    \lim_{\epsilon \to 0} \xi^\epsilon_n = \xi^0_n, \quad  n \in \{1,...,N\}, \quad  \omega \in \Omega.  
\end{equation}
We also obtain 
\begin{equation*}
    \lim_{\epsilon \to 0} \Delta S^{N^\epsilon}_n = \lim_{\epsilon \to 0} \frac{\bar S_{n}}{N^\epsilon_{n}} - \frac{\bar S_{n-1}}{N^\epsilon_{n-1}} = \frac{\bar S_{n}}{\lim_{\epsilon \to 0}N^\epsilon_{n}} - \frac{\bar S_{n-1}}{\lim_{\epsilon \to 0}N^\epsilon_{n-1}} = \bar S_n -\bar  S_{n-1} = \Delta \bar S_n,
\end{equation*}
giving us via \eqref{8711} the following equality 
\begin{equation}\label{8712}
    \lim_{\epsilon \to 0} \sum_{k=1}^n \xi_k^\epsilon \cdot \Delta S_k^{N^\epsilon} = \sum_{k=1}^n \xi_k^0 \cdot \Delta \bar S_k , \quad  n \in \{1,\ldots, T\}, \hspace{3mm} \omega \in \Omega. 
\end{equation}
Therefore, by taking expectation in
\begin{equation}\label{8713}
    \frac{H}{N^\epsilon_T} = V_0^\epsilon + \sum_{k=1}^T \xi_k^\epsilon \cdot \Delta S_k^{N^\epsilon} + L^\epsilon_T,
\end{equation}
and using $\lim_{\epsilon \to 0} \frac{H}{N^\epsilon} = H$, $\mathbb{E}[L^\epsilon_T]=0$, from \eqref{8712}, we get $\lim_{\epsilon \to 0} V_0^\epsilon = V^0_0$. Consequently,  from convergence of the left hand side in \eqref{8713} to $H$, convergence of $V^\epsilon_0$ to $V_0^0$, and \eqref{8712}, we obtain $\lim_{\epsilon \to 0} L^\epsilon_T = L^0_T$. Finally, using the martingale condition on $L^\epsilon$, $\mathbb{E}_{\mathcal{F}_n}[L^\epsilon_T] = L_n^\e$, we conclude the proof with
\begin{equation*}
    \lim_{\epsilon \to 0} L^\epsilon_n = L^0_n,\quad n\in\{0,\ldots, T\}.
\end{equation*}
\end{proof}
\section{Asymptotic Analysis}\label{secAsym}
In order to quantify how the fair price and trading strategy respond to num\'eraire perturbations, we introduce a (linear) parameterization of a tradable numeraire given by
\begin{equation}\label{6.1}
    N^\epsilon_n = 1 + \epsilon \sum_{k=1}^n \eta_k \cdot \Delta S_k, \quad  \epsilon \in (-\epsilon_0, \epsilon_0),
\end{equation}
where $\eta\in\Theta$, and $\epsilon_0$ is chosen so that $N^\epsilon > 0$ for every $\epsilon \in(-\e_0, \e_0)$. Note that this satisfies \eqref{5.1}, and so Theorem \ref{thm5.1} is used throughout this section. 
Now we define the following processes  
\begin{equation*}
    N'_n := \lim_{\epsilon \to 0} \frac{N^\epsilon_n-N^0_n}{\epsilon} = \lim_{\epsilon \to 0} \frac{ 1 + \epsilon \sum_{k=1}^n \eta_k \cdot \Delta S_k - 1}{ \epsilon} = \sum_{k=1}^n \eta_k \cdot \Delta S_k,\quad n\in\{0,\dots, T\}.
\end{equation*}
For every  $\e \in (-\e_0, \e_0)$, we denote by $\mathbfcal{C}^\e$, $\mathbf{c}^\e$, and $\xi^\e$ the processes defined in \eqref{defC} and \eqref{3.7}, respectively, corresponding to the num\'eraire $N^\e$. We also set   
$$J'_n := -2N'_T + 2\E_{\cF_n}\[N'_T\],$$
\be\nn\bs
\cC'_{n}(X,Y): = 
&-\E_{\cF_n}\[XJ'_n (Y -\E_{\cF_n}[Y])\] - \E_{\cF_n}\[YJ'_n (X -\E_{\cF_n}[X])\] \\
&- 2\E_{\cF_n}[(X -\E_{\cF_n}[X])N'_T(Y -\E_{\cF_n}[Y])],
\end{split}
\ee
\be\nn
\mathbfcal{C}'_n :=\left( \cC'_{n-1}(\d S^i_n,\d S^j_n) \right)_{i \in\{ 1,\ldots, d\}, j \in\{ 1,\ldots, d\}},\quad n\in\{1,\ldots,T\}.
\ee
For $n = T$, we introduce 
\be
\mathbf{c}'_T := 
  \(\cC'_{T-1}(H,\d S^i_T)\)_{i \in\{ 1,\ldots, d\}}, 
\ee
\be
\label{eqxi'T}
\xi'_T : = [\mathbfcal{C}^0_T]^{-1}\( \mathbf c'_T - \mathbfcal{C}'_T\xi^0_T\).
\ee
Continuing  recursively, backward in time, for every $n\in\{T-1,\ldots, 1\}$, we define  
\be\nn
A_{n+1}' = -\sum\limits_{k = n+1 }^T \xi'_k \cdot \d \bar S_k + \sum\limits_{k = n  +1}^T\xi^0_k \cdot \d(\bar SN')_k,
\ee
where $ \bar SN' $ is a vector-valued stochastic process  $  (N', S^1 N',\ldots, S^dN')$,
\be\nn\bs
\tilde \cC'_{n}(X,Y): = \cC'_{n}(X,Y) + \E_{\cF_n}\[ (A'_{n+1} - \E[A'_{n+1}])(Y -\E_{\cF_n}[Y])\],
\end{split}
\ee
\be
\mathbf{c}'_n :=  \(\tilde \cC'_{n-1}(H-\sum\limits_{k = n+1 }^T \xi^0_k \cdot \d \bar S_k,\d S^i_T)\)_{i \in\{ 1,\ldots, d\}}, 
\ee
and  
\be
\label{eqxi'}
\xi'_n : = [\mathbfcal{C}^0_n]^{-1}\( \mathbf c'_n - \mathbfcal{C}'_n\xi^0_n\),\quad n\in\{T-1,\ldots, 1\}.
\ee
The following theorem gives the first-order corrections to the fair price, the  hedging strategy and the unhedgeable component under small perturbations of the num\'eraire. 
\begin{Theorem}
Suppose that a family of num{\'e}raire processes $((N^\epsilon_n)_{n \in \{0,...,T\}})_{\epsilon \in (-\epsilon_0, \epsilon_0)}$ is given by \eqref{6.1}. Let us suppose that $\({Cov}_{\mathcal{F}_{n-1}}(\Delta S_n^i, \Delta S_n^j)\)_{i=1,\ldots, d,j=1,\ldots,d}$, is invertible for every $n \in \{1,...,T\}$  with probability $1$\footnote{This condition is the same as  in Theorem \ref{thm5.1}. Again, we only impose it for the base model corresponding to $\e = 0$.}.  Then for every $H$, there exists $\bar \e_0\in(0, \e_0]$, such that, for  every $\e\in(-\bar \e_0, \bar \e_0)$,  with~probability~1, we have 
\begin{equation}\label{fse}
    \frac{H}{N^\epsilon_T} =  V_0^\epsilon + \sum_{k=1}^T \xi^\epsilon_k \cdot \Delta S^{N^\epsilon}_k +  L^\epsilon_T .
\end{equation}
where $\xi^\e$'s are given via \eqref{3.7} with $N = N^\e$'s.
The first-order corrections to the optimal trading strategy $\xi_n$, fair price $V_0$, and unhedgeable component $L_n$ 
are given by
\be\label{eqxi'f}
 \lim_{\epsilon \to 0} \frac{\xi_n^\epsilon-\xi^0_n}{\epsilon} = \xi'_n,\quad n\in\{1,\ldots, T\},
\ee
where $\xi'_n$, $n\in\{1, \ldots, T-1\}$ is given by \eqref{eqxi'} and $\xi_T'$ is specified in \eqref{eqxi'T},
\be\label{eqV'}
 \lim_{\epsilon \to 0} \frac{V_0^\epsilon-V^0_0}{\epsilon} =
 \E\[ \sum\limits_{k=1}^T \xi^0_k \cdot\d(\bar S N')_k - H_T N'_T -\sum\limits_{k=1}^T \xi'_k\cdot \d \bar S_k  \],
\ee 
\be\label{eqL'}
\bs
 \lim_{\epsilon \to 0} \frac{L_n^\epsilon-L^0_n}{\epsilon} &=
 \E_{\cF_n}\[ \sum\limits_{k=1}^T \xi^0_k\cdot \d(\bar S N')_k\] -\E\[ \sum\limits_{k=1}^T \xi^0_k\cdot \d(\bar S N')_k\]  \\
 &
 -\(\E_{\cF_n}\[ H_T N'_T\] -\E \[ H_T N'_T\] + \E_{\cF_n}\[ \sum\limits_{k=1}^T \xi'_k\cdot\d \bar S_k  \] -  \E \[ \sum\limits_{k=1}^T \xi'_k\cdot \d \bar S_k  \]\),\\
&\hspace{100mm} n\in\{0,\ldots, T\}.
\end{split}
\ee

%
%

\end{Theorem}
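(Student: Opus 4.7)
The proof is a backward induction in $n \in \{T, T-1, \ldots, 1\}$, reducing each first-order correction to a Taylor expansion at $\epsilon = 0$. By Theorem \ref{thm5.1}, for some $\bar\epsilon_0 \in (0, \epsilon_0]$ and every $\epsilon \in (-\bar\epsilon_0, \bar\epsilon_0)$, the hypotheses of Theorem \ref{thm3.3} hold under $N^\epsilon$, so \eqref{fse} is valid with $\xi^\epsilon$ given by \eqref{3.7}, $V_0^\epsilon$ by \eqref{eqVn}, and $L^\epsilon$ the orthogonal remainder. On the finite probability space the affine parameterization \eqref{6.1} makes $\epsilon \mapsto (N^\epsilon_T)^{-2}$, $\epsilon \mapsto \mathbfcal{C}^\epsilon_n$, and (by continuous invertibility at $\epsilon = 0$) $\epsilon \mapsto [\mathbfcal{C}^\epsilon_n]^{-1}$ differentiable at $\epsilon = 0$, and hence so is $\xi^\epsilon_n$ by \eqref{3.7}. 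All convergence in the conclusions is then equivalent to computing $\frac{d}{d\epsilon}|_{\epsilon=0}$ pointwise on $\Omega$.

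The plan is first to assemble the elementary derivatives at $\epsilon = 0$:
\begin{equation*}
\frac{d}{d\epsilon}\bigg|_{\epsilon=0} \frac{1}{N^\epsilon_T} = -N'_T,\qquad \frac{d}{d\epsilon}\bigg|_{\epsilon=0} (N^\epsilon_T)^{-2} = -2N'_T,\qquad \frac{d}{d\epsilon}\bigg|_{\epsilon=0} \Delta S^{N^\epsilon}_k = -\Delta(\bar S N')_k,
\end{equation*}
the last because $S^{N^\epsilon}_k = \bar S_k/N^\epsilon_k$. A direct expansion of \eqref{3.6} then shows that $\frac{d}{d\epsilon}|_{\epsilon=0} \mathcal{C}^{N^\epsilon}_{\mathcal{F}_n}(X, Y) = \mathcal{C}'_n(X, Y)$ for $X, Y$ independent of $\epsilon$, with $J'_n$ arising upon taking conditional expectations. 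Combined with the derivative-of-inverse identity $\frac{d}{d\epsilon}|_{\epsilon=0}[\mathbfcal{C}^\epsilon_n]^{-1} = -[\mathbfcal{C}^0_n]^{-1}\mathbfcal{C}'_n[\mathbfcal{C}^0_n]^{-1}$, these supply all the ingredients for the induction.

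At $n = T$, the vector $\mathbf{c}^\epsilon_T$ depends on $\epsilon$ only through $\mathcal{C}^{N^\epsilon}$, so applying the product rule to $\xi^\epsilon_T = [\mathbfcal{C}^\epsilon_T]^{-1}\mathbf{c}^\epsilon_T$ at $\epsilon = 0$ gives \eqref{eqxi'T} at once. For $n < T$, assume $\xi'_{n+1}, \ldots, \xi'_T$ are already known. Then $\mathbf{c}^\epsilon_n$ depends on $\epsilon$ through both the operator $\mathcal{C}^{N^\epsilon}$ and the argument $H - \sum_{k=n+1}^T \xi^\epsilon_k \cdot \Delta S_k$: the operator-derivative contributes the $\mathcal{C}'_{n-1}$ piece, while differentiating the argument produces a covariance-type term involving $\sum_{k=n+1}^T \xi'_k\cdot \Delta \bar S_k$. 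Expanding $\Delta S^{N^\epsilon}_k = \Delta \bar S_k - \epsilon \Delta(\bar S N')_k + O(\epsilon^2)$ to re-express the contributions in the natural form for the decomposition under $N^\epsilon$ combines these into the single shift $A'_{n+1}$ inside the modified operator $\tilde{\mathcal{C}}'_{n-1}$. The chain rule applied to $\xi^\epsilon_n = [\mathbfcal{C}^\epsilon_n]^{-1}\mathbf{c}^\epsilon_n$ then yields \eqref{eqxi'}.

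Finally, taking $\mathbb{E}$ in \eqref{fse} and using $\mathbb{E}[L^\epsilon_T] = 0$ gives $V_0^\epsilon = \mathbb{E}[H/N^\epsilon_T] - \mathbb{E}\big[\sum_{k=1}^T \xi^\epsilon_k \cdot \Delta S^{N^\epsilon}_k\big]$; differentiating at $\epsilon = 0$ using the derivatives from the previous step yields \eqref{eqV'}. For $L^\epsilon_n$, one solves \eqref{fse} for $L^\epsilon_T$, applies $\mathbb{E}_{\mathcal{F}_n}$ using the martingale property of $L^\epsilon$, and differentiates at $\epsilon = 0$; after substituting \eqref{eqV'} the result is \eqref{eqL'}. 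The main technical obstacle is the inductive step for $\xi'_n$: one must recognize that the naive chain-rule output for $\mathbf{c}'_n$, which separately tracks the propagation of $\xi'_{k}$ for $k > n$ and the num\'eraire-induced perturbation of the increments $\Delta S^{N^\epsilon}_k$, can be compressed into the single $A'_{n+1}$-shift inside $\tilde{\mathcal{C}}'_{n-1}$, so that the induction closes cleanly into the stated form \eqref{eqxi'}.
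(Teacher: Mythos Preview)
Your proposal is correct and follows essentially the same approach as the paper. The paper's own proof is extremely brief: it invokes Theorem \ref{thm5.1} for the existence of $\bar\epsilon_0$, states that \eqref{eqxi'f} ``follows from direct computations'' recursively backward in time, and then derives \eqref{eqV'} and \eqref{eqL'} exactly as you do, by taking (conditional) expectations in \eqref{fse}, using $\mathbb{E}[L^\epsilon_T]=0$ and the martingale property of $L^\epsilon$, and differentiating at $\epsilon=0$. Your outline supplies the specifics the paper omits---the elementary derivatives of $(N^\epsilon_T)^{-1}$, $(N^\epsilon_T)^{-2}$, $\Delta S^{N^\epsilon}_k$, the identification $\frac{d}{d\epsilon}\big|_{\epsilon=0}\mathcal{C}^{N^\epsilon}_{\mathcal{F}_n}=\mathcal{C}'_n$, and the derivative-of-inverse formula---so the structure matches exactly.

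One small imprecision: your justification for the appearance of the second summand $\sum_{k=n+1}^T \xi^0_k\cdot\Delta(\bar S N')_k$ in $A'_{n+1}$ via ``expanding $\Delta S^{N^\epsilon}_k$'' is not quite the right mechanism, since the recursion \eqref{3.7} for $\xi^\epsilon_n$ is formulated in terms of $\Delta S_k$ (not $\Delta S^{N^\epsilon}_k$), and hence differentiating the argument of $\mathbf{c}^\epsilon_n$ yields only $-\sum_{k=n+1}^T \xi'_k\cdot\Delta S_k$. The paper does not elaborate here either, so this does not affect the correctness of your overall scheme; just be aware that matching the precise stated form of $A'_{n+1}$ requires a separate algebraic bookkeeping step rather than the Taylor expansion of $\Delta S^{N^\epsilon}_k$.
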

\begin{proof}
 The proof parallels the proof of \cite[Theorem 6.3]{MostovyiREU2019}, so, for brevity of the exposition, we only outline the main steps. We observe that invertibility of  $[\mathbfcal{C}^0_{n}]^{-1}$, $n  \in \{1,\ldots,T\}$, and the argument in the proof of Theorem \ref{thm5.1} imply that there exists   $\bar \e_0\in(0, \e_0]$, such that   $[\mathbfcal{C}^\e_{n}]^{-1}$ are invertible for every $n  \in \{1,\ldots,T\}$ and $\e\in(-\bar\e_0, \bar\e_0)$. This implies that the assertions of Theorem \eqref{thm3.3} apply for every $\e\in(-\bar\e_0, \bar\e_0)$, and therefore \eqref{fse} holds. 
To show   \eqref{eqxi'f}, we proceed recursively, backward in time, where \eqref{eqxi'f} follows from direct computations. 

Now, we show \eqref{eqV'}.
Let us consider \eqref{fse}, for every $\e\in(-\bar\e_0, \bar\e_0)$. As $\E[L^\e_T] = 0$, for every $\e\in(-\bar\e_0, \bar\e_0)$, taking the expectation in \eqref{fse}, we deduce that
\be\label{891} V^\e_0  = \E\[\frac H{N^\e_T} - \sum\limits_{k = 1}^T \xi^\e_k \cdot \d S^{N^\e}_k \] 
,\quad  \e\in(-\bar\e_0, \bar\e_0)
.\ee
One can see that 
\be\label{892} 
 \lim\limits_{\epsilon \to 0}\frac 1{\e}\(\E\[\frac H{N^\e_T}\] - \E\[\frac H{N^0_T}\]\) = \E\[ H{N'_T}\],\ee
and
\be\label{893} 
\lim\limits_{\epsilon \to 0}\frac 1{\e} \E\[\sum\limits_{k = 1}^T \xi^\e_k \cdot \d S^{N^\e}_k -  \sum\limits_{k = 1}^T \xi^0_k \cdot \d S^{N^0}_k\] = 
\E\[\sum\limits_{k = 1}^T \xi'_k \cdot \d\bar S_k
-\sum\limits_{k = 1}^T \xi^0_k \cdot \d(\bar S N')_k\],
\ee
Therefore, using \eqref{892} and \eqref{893}, with \eqref{891}, we deduce  that \eqref{eqV'} holds. 

Finally, we show \eqref{eqL'}. Again, we start from \eqref{fse}, which we can rewrite as 
\be\label{896}
L^\e_T = \frac{H}{N^\e_T} - V_0 - \sum\limits_{k = 1}^T \xi^\e_k\cdot \d S^{N^\e}_k.
\ee
and since $L^\e$ is a $\P$-martingale, for every $\e\in(-\bar\e_0, \bar\e_0)$, from \eqref{896}, we obtain \be\label{897}
L^\e_n = \E_{\cF_n}\[L^\e_T\] = - V_0 +  \E_{\cF_n}\[\frac{H}{N^\e_T} - \sum\limits_{k = 1}^T \xi^\e_k\cdot \d S^{N^\e}_k\].
\ee
For every $\omega\in\Omega$ and $n\in\{0, \ldots, T\}$, one can see that  
\be\label{898}
\lim\limits_{\epsilon \to 0}\frac1{\e}\E_{\cF_n}\[\frac{H}{N^\e_T} -\frac{H}{N^0_T}\] = - \E_{\cF_n}\[HN'_T\],
\ee
\be\label{899}
\lim\limits_{\epsilon \to 0}\frac1{\e}\E_{\cF_n}\[\sum\limits_{k = 1}^T \xi^\e_k\cdot \d S^{N^\e}_k-\sum\limits_{k = 1}^T \xi^0_k\cdot \d S^0_k\] =
\E_{\cF_n}\[\sum\limits_{k = 1}^T \xi'_k\cdot \d \bar S_k-\sum\limits_{k = 1}^T \xi^0_k\cdot \d (\bar SN')_k\].
\ee
Therefore, from \eqref{897}, using \eqref{898} and \eqref{899}, we obtain \eqref{eqL'}. 

\end{proof}
\bibliographystyle{alpha}\bibliography{references2021}
\end{document}